\newtheorem{theorem}{Theorem}
\newtheorem{example}[theorem]{Example}
\newtheorem{proposition}[theorem]{Proposition}
\newtheorem{remark}[theorem]{Remark}
\newenvironment{proof}[1][Proof]{\text{#1.} }{\ \rule{0.5em}{0.5em}}
\def\R{{\mathbb R}}        
\def\E{{\mathbb E}}        
\def\1{{\mathbf 1}}        
\def\L{{\mathcal L} \,}
\numberwithin{equation}{section}
\numberwithin{theorem}{section}
\title{Optimal Execution of Limit and  Market Orders\\ with Trade Director, Speed Limiter, and Fill Uncertainty  }
\author{Brian Bulthuis\thanks{KCG Holdings Inc., New York, NY 10006. Email: bbulthuis@kcg.com} \and Julio Concha\thanks{KCG Holdings Inc., New York, NY 10006. Email: jconcha@kcg.com} \and Tim Leung\thanks{Applied Mathematics Department, University of Washington, Seattle WA 98195. Email: {timleung@uw.edu}. \mbox{Corresponding author}. } \and Brian Ward\thanks{Industrial Engineering \& Operations Research (IEOR) Department, Columbia University, New York, NY 10027.Email: {bmw2150@columbia.edu}.} \ }
\date{\today}
\begin{document}
\maketitle

\abstract{We study the optimal execution of market and limit orders with permanent and temporary price impacts as well as  uncertainty in the filling of limit orders. Our continuous-time model incorporates a trade speed limiter and a trader director to provide better control on the trading rates. We formulate a stochastic control problem to determine the optimal dynamic strategy for trade execution, with  a quadratic terminal penalty to ensure complete liquidation. In addition, we identify conditions on the  model parameters  to ensure optimality of the controls and finiteness of the associated value functions. For comparison, we also solve the schedule-following  optimal execution problem that penalizes deviations from an order schedule. Numerical results are provided to illustrate the  optimal market and limit orders over time.  }

\newpage
\section{Introduction}
In the U.S. market, institutional investors own or manage a major share of the public equities.\footnote{In 2010, about 67\% of the U.S. equity market capitalization is owned/managed by institutional investors, as compared to 8\% in 1950, according to report by SEC Commissioner Luis A. Aguilar. Source: \url{https://www.sec.gov/News/Speech/Detail/Speech/1365171515808}.}  For many institutional investors, orders to buy or sell stocks can come in large sizes.  Such large orders are thought to be costly to implement as they create large, immediate demand (resp. supply) for a buy (resp. sell) order. A large buy (resp. sell) order may cause other traders to raise (resp. lower) their offered price because they perceive a change in value from the contra-orders of the institutional investor. This effect of additional implementation cost due to short-term liquidity demand is often called market impact.  Practitioners commonly view this to be a significant cost to track, avoid or minimize. Slow trading usually reduces  market impact by minimally affecting demand/supply over time. However, this strategy exposes the agency trader to the risk in the stock price movement and also the failure to complete buying/selling all desired units of stocks. Hence, it is the goal of many algorithmic traders to decide how to strategically trade off cost and risk for an institutional trade.

In this paper, we  analyze a continuous-time stochastic model for optimal execution using both  market and limit orders. Our paper extends the foundational market impact model of \cite{AC2000} by including limit orders with uncertain fill rates, and a speed limiter that penalizes overly large trading rates. Unlike \cite{AC2000}, our model has two order types, which can potentially have negative signs (i.e., submission of sell orders in a buy program or vice versa). Thus, we will need to consider how to keep trading signs to be positive. Additionally, we construct penalties that drive the optimal trading rates in the desired direction of complete liquidation. Specifically, we include a non-liquidation terminal penalty as well as a penalty, called the trade director, to push trading rates for market and limit orders to be in the same direction. In combination, these three penalties (non-liquidation, trade director and speed limiter) force the algorithm to trade to full liquidation, while simultaneously tailoring the sign and magnitude of the trading rates.  

Our main objective is to investigate how to optimally allocate aggressive (market) and passive (limit) orders over time. The aggressive nature of market orders means they tend to have higher market impact. On the other hand, limit orders are lower cost in both price and market impact, but they are less likely to fill.  The  fill uncertainty is modeled  as an affine function of the trading rate in limit orders, and is reflected in an additional diffusion term that is correlated with the stock price process.

Under our framework, the optimal liquidation problem is cast as a stochastic optimal control problem. The associated nonlinear HJB equation can be simplified to a system of linear ODEs. We then explore the special cases of constant and linear uncertainty to examine the properties of the solutions and analyze the corresponding explicit optimal strategies. Among our results, we characterize the trader's \textit{buy-sell boundary}, which is a time deterministic function governing when order types are non-negative. We also consider an alternative model to incorporate a  benchmark trading schedule, and  show that our model is capable of generating optimal strategies that very closely follow any given schedule while minimizing trading cost. This allows us to understand a tradeoff between following schedule and seeking profits, and thus, evaluate how profitable it may be to deviate from schedule. Throughout the paper, we provide numerical results to illustrate the optimal liquidation strategies in various settings. 

Among our findings, we derive the trading rates explicitly and further examine the conditions under which they are non-negative over the trading horizon. In addition, we also introduce the critical time span for an order placement program. If the trading horizon exceeds this critical time, the trader has too much time to explore profitable opportunities in trading the stock. This leads to an interesting and intuitive trade-off: the non-liquidation penalty must be increased in order to prevent the trader from using time to explore unbounded profits, resulting in over trading. 

There are a number of related studies on  optimal liquidation with similar basic settings as in \cite{AC2000}, though liquidation with both market and limit orders has only come to the forefront of the algorithmic trading literature in recent years. A recent paper by \cite{TaiHoACF} extends the Almgren-Chriss framework to include uncertain order fills of a single order type. Our paper extends their model to include both market and limit orders, along with additional constraints and penalties to guide the trade direction and limit order size. In the case of infinite uncertainty, our framework captures their model as a special case, with our optimal market order rate coinciding with theirs. \cite{SeblimitmarketQF2015} proposed a model for optimal execution with market and limit orders that uses jump processes and optimal multiple stopping to determine the optimal market order placement time. They too penalize deviations from a schedule. For a specific example of schedule following, we refer to a recent study by \cite{sebVWAP},  which derives  a  closed-form   optimal strategy that  follows a volume weighted average price (VWAP) schedule.



 Our paper is also related to the literature on the optimal market making problem. Market making involves simultaneously determining the prices and quantities to buy and sell a stock. The market maker receives the spread in exchange for the risk of holding a position. \cite{AvellandaStoikov2008}   apply indifference pricing techniques to find the optimal quotes for a risk-averse investor trading over a finite period. \cite{GuilbaudPham2013}   study a market making problem via the optimal placement of limit orders as well as using market orders to balance inventory risk. In these studies, there are no constraints on the signs of orders because a market maker typically  places buy and sell orders simultaneously. For more related studies on algorithmic trading and market microstructure, we refer to the books by \cite{lehalleBook} and by \cite{sebBook}.

The rest of the paper is organized as follows. In Section \ref{sec:model_order_fills}, we formulate a stochastic control problem for     optimal order execution under a sell program. In Section \ref{sec:affine_unc}, we solve the problem under the assumption of affine order fill uncertainty. Then, we discuss a number of important properties of the solutions and trading scenarios under constant uncertainty in Section \ref{sec:constant_unc} and under linear uncertainty in Section  \ref{sec:linear_unc}. In Section \ref{sec:schedule_following}, we adapt our model to penalize strategies that deviate from a pre-specified schedule for share holdings. Section \ref{sec:conclude} concludes the paper. 

\section{Optimal Order Type Selection}\label{sec:model_order_fills} 
Throughout this paper, we take the perspective of a sell program. The mathematics for a buy program is completely analogous. We first present the formulation of our optimal execution model, then derive the optimal strategies. We will further discuss their properties in Sections \ref{sec:constant_unc} and \ref{sec:linear_unc} under constant and linear limit order uncertainty, respectively.  

 In the background, we fix a probability space $(\Omega, \mathcal{F},  \mathbb{P})$, and a finite trading horizon $[0,T]$. Our model involves two stochastic  controls: (i) the trading rate of market orders, $v_t$ and (ii)  the trading rate of limit orders, $L_t$, over time $t\in [0,T]$. The trader's stock holdings, denoted by $x_t$ at time $t$, is depleted by the trading rates, $v_t$ and $L_t$. To capture the uncertainty of limit order fills, there is an additional diffusion term $m(L_t)dZ_t$, where $m$ is a deterministic function of the limit order trading rate and $Z$ is a standard Brownian motion.   Precisely, the trader's position satisfies the SDE:
\begin{equation}\label{eq:SDEx}
dx_t=-v_t dt+\left(-L_t dt+m(L_t)dZ_t\right).
\end{equation}
The stock price $S$ and transacted price $\widetilde{S}$ follow a generalized version of \cite{AC2000} dynamics: 
\begin{equation}\label{eq:S_price}
\begin{aligned}
dS_t&=\gamma dx_t +\mu dt +\sigma dW_t,\\
\widetilde{S}_t&=S_t+h(v_t,L_t).
\end{aligned}
\end{equation}
In other words, the price $S$ follows an arithmetic Brownian motion with drift $\mu\in \R$ and volatility $\sigma>0$, along with a linear permanent impact with coefficient $\gamma>0$. The transacted price reflects a temporary impact, $h(v_t,L_t)$, which is a function of the current trading rates of market and limit orders. As we seek closed-form solutions, we assume that the temporary impact is affine, i.e. $h(v_t,L_t)=-\eta_0-\eta_1 v_t-\eta_2 L_t$, with constants $\eta_0,\eta_1,\eta_2>0$. The two standard Brownian motions, $Z$ and $W$, are correlated with an instantaneous correlation parameter $\rho\in(-1,1)$. The investor's information flow is modeled by the filtration $\mathbb{F}$ generated by $(Z,W)$, and all admissible strategies must be  $\mathbb{F}$-adapted.

We typically expect $\rho m(L)<0$ so long as $L\ge0$. This parameter sign choice will give rise to an \emph{adverse selection} effect that is often considered in algorithmic trading models. Adverse selection is an implicit cost incurred when, for example, the stock price rises just after making a sale. It would have been better for the trader to wait for the price rise before selling the stock because then she would have realized an extra profit. To see this, first consider the case with $\rho>0$. Then, if $\Delta Z_t>0$ is observed over some small time period $\Delta t$, we typically see $\Delta W_t>0$, and thus $\Delta S_t>0$, an increase in the stock price. For the model to incorporate an adverse selection effect, we want the trader to make an excess sale in limit orders. This will occur if $m(L_t)<0$, for then $m(L_t)\Delta Z_t<0$ and $\Delta x_t<0$. Reversing the signs and starting with $\rho<0$, we find that we need $m(L_t)>0$ for $\Delta x_t<0$. The two cases can be summarized by the restriction $\rho m(L)<0$. In our implementation, we will choose parameters that satisfy this adverse selection criterion, though our model also works when it does not hold. 

As a standard performance metric, the  profit-and-loss (PNL) of any trading strategy is defined as
\begin{align}
\Pi_T&:=x_T\left(S_T-S_0\right)+\int_0^T\left(S_0-\widetilde{S}_u\right)dx_u\\
&=x_T S_T-x_0 S_0-\int_0^T \widetilde{S}_u dx_u.\label{PNL}
\end{align}
The first two terms in  \eqref{PNL}   measure the change in fair value of the portfolio as measured by the stock  price. The third term is the revenue/cost of trading: it measures at each point in time the change in position multiplied by the transacted price $\widetilde{S}_t$ at that time. In a sell program, the infinitesimal amount of shares sold (represented by $-dx_t$) tends to be positive. Thus, the integral $-\int_0^T \widetilde{S}_u dx_u$ can be interpreted as revenue of sale. 

When we plug in the dynamics given by equations \eqref{eq:SDEx} and \eqref{eq:S_price}, the PNL becomes
\begin{equation}\label{eq:PNL_calculated}
\begin{aligned}
\Pi_T&=\frac{\gamma}{2}\left(x_T^2-x_0^2\right)+\int_0^T\left[\mu x_u+\rho\sigma m(L_u) +\frac{\gamma}{2}m^2(L_u)+h(v_u,L_u)(v_u+L_u)\right]du\\
&+\int_0^T \sigma x_u dW_u -\int_0^T h(v_u,L_u) m(L_u) dZ_u.
\end{aligned}
\end{equation}


We also incorporate three model features: (i) quadratic terminal penalty, (ii) trade director, and (iii) trade speed limiter. First, we must ensure that the position is actually liquidated. To that end, we add a quadratic penalty term $\bar{f}(x)=-\beta x^2$, so that anything but complete liquidation is undesirable. This is the same penalty as was studied in \cite{TaiHoACF}. The second penalty is introduced to penalize placing buy side market orders and sell side limit orders (or vice versa) simultaneously. Such order placement would be as if the trader were buying/selling the stock to herself. This penalty is called the trade director. The goal is to encourage orders to have the same sign or equivalently, $v_tL_t\ge0$. In combination with the non-liquidation penalty, we expect that only the choice $v_t\ge0$, $L_t\ge0$. We introduce   the penalty integral $\int_0^T \alpha v_u L_u du$, with the Lagrange multiplier $\alpha\ge0$. 

We also add a speed limiter to prevent  the trader from trading too quickly with either order type. Suppose that management has set two  trading speed caps $r_{1}, r_{2}>0$, for market and limit orders, respectively. To encourage the trader to satisfy the constraints: $-r_{1}\le v_t \le r_{1}$ and $-r_{2}\le L_t \le r_{2}$, for all $t$, or equivalently, $v_t^2 \le r_{1}^2=:R_1$ and  $L_t^2 \le r_{2}^2=:R_2$,   we introduce the penalty term  $\int_0^T \beta_1(R_1-v_u^2)+\beta_2(R_2-L_u^2) du$, with the Lagrange multipliers $\beta_1, \beta_2\ge0$. The use of quadratic penalties such as the ones described here is also found in other classes of trading problems, such as hedging  via risk minimization with constraints (see, for example, \cite{lee2008}). 

Incorporating these three features, and utilizing equation \eqref{eq:PNL_calculated}, we now write  the compensated PNL as
\begin{align}
\widehat{\Pi}_T &=\Pi_T+\int_0^T\!\big[ \alpha v_uL_u +  \beta_1(R_1-v_u^2)+\beta_2(R_2-L_u^2) \,\big]\,du+\bar{f}(x_T)\label{PNL_comp}\\
 &=\frac{\gamma}{2}\left(x_T^2-x_0^2\right)+\bar{f}(x_T)+\int_0^T \sigma x_u dW_u\\
 &\quad  - \int_0^T h(v_u,L_u)m(L_u)dZ_u+\int_0^Tg(x_u,v_u,L_u)du,\label{Pi22}
\end{align}
where we have defined 
\begin{equation}
\begin{aligned}
\bar{f} (x) &=\, -{\beta} x^2,\\
g(x,v,L)&=\, \mu x +\rho \sigma m(L)+\frac{\gamma}{2}m^2(L)+h(v,L)(v+L)\\&+\alpha vL+\beta_1(R_1-v^2)+\beta_2(R_2-L^2).\label{fun_g}
\end{aligned}
\end{equation}
  The trader's objective is to maximize the expectation of the compensated PNL $\widehat{\Pi}_T$ in  \eqref{Pi22} by choosing trading rates for market and limit orders.  This   leads to  the value function
\begin{equation}\label{eq:opt_control_prob}
\begin{aligned}
&V(t,x):=\underset{(v_u,L_u)_{t\le u\le T}}{\sup}\,\E\left[\,\bar{f}(x_T)+\frac{\gamma}{2}x_T^2+\int_t^T g(x_u,L_u,v_u) du\,\Big|\,x_t=x\,\right]-\frac{\gamma}{2}x^2.
\end{aligned}
\end{equation}
We will study the nonlinear HJB PDE problem associated with the stochastic control problem  \eqref{eq:opt_control_prob}:
\begin{equation}\label{HJBeq_TaiHo}
	\begin{aligned}
 V_t+\underset{v,L}{\sup}\left[-(v+L)V_x+\frac{1}{2}m^2(L)V_{xx}+g(x,v,L)\right] &= 0, \quad (t,x) \in [0,T) \times \R,\\
 V(T,x)&=\bar{f}(x)+\frac{\gamma}{2}x^2, \quad x \in \R,
\end{aligned}
\end{equation} and solve for the value function $V$ and the corresponding optimal trading strategies.


\section{Affine Uncertainty of Limit Orders}\label{sec:affine_unc}
We present here an analytic  solution to the stochastic  control problem \eqref{eq:opt_control_prob} when the fill uncertainty function is affine in the limit order trading rate.  In subsequent sections, we will analyze the  special cases with constant uncertainty and linear uncertainty, and highlight their distinct features.

Following our model formulation, we now  set the fill uncertainty and temporary price impact to be the affine functions:
\begin{align}\label{AFFINE}
m(L)=m_0+m_1L, \quad \text{ and } \quad  h(v,L)=-\eta_0-\eta_1 v -\eta_2 L. 
\end{align} 
In this case, the HJB PDE problem in \eqref{HJBeq_TaiHo}  becomes 
\begin{equation}\label{eq:HJBeq}
\begin{aligned}
V_t+\mu x+\frac{m_0^2}{2}(V_{xx}+\gamma)+\rho\sigma m_0+\beta_1R_1+\beta_2R_2+\underset{v,L}{\sup}\,\big\{J(t,x,v,L)\big\}=0,
\end{aligned}
\end{equation}
for all $(t,x)\in\left[0,T\right)\times\R$, with  the terminal condition $V(T,x)=\left(\frac{\gamma}{2}-\beta\right)x^2$ for all $x\in\R$. Note that $J$ in \eqref{eq:HJBeq}
 is defined as
\begin{equation}\label{eq:Jdefn}
\begin{aligned}
J(t,x,v,L):&=-(v+L)V_x+\frac{1}{2}(m_0+m_1L)^2V_{xx}+g(x,v,L)\\
&-\mu x-\frac{m_0^2}{2}(V_{xx}+\gamma)-\rho\sigma m_0-\beta_1R_1-\beta_2R_2\\
&=\left(-\eta_0-V_x +m_0m_1(V_{xx}+\gamma)+\rho \sigma m_1\right)L+\left(-\eta_2-\beta_2+\frac{m_1^2}{2}(V_{xx}+\gamma)\right)L^2 \\
&\left(-\eta_0-V_x\right) v+\left(-\eta_1-\beta_1 \right)v^2+\left(\alpha-\eta_1-\eta_2\right)vL
\end{aligned}
\end{equation}
where  $g(x,v,L)$ is defined in \eqref{fun_g}.

\begin{proposition}\label{prop:optcond} Under the affine uncertainty model, if the value function $V(t,x)$ satisfies the  \textit{second-order condition}:
\begin{align}
	m_1^2 \,V_{xx}(t,x)< {{C}-\gamma m_1^2}, \qquad \forall (t,x),\label{optimalcond}
\end{align} where \begin{align}\label{CC}C:= \frac{4(\eta_1\eta_2+\beta_1\beta_2+\eta_1\beta_2+\eta_2\beta_1)-\left(\eta_1+\eta_2-\alpha\right)^2}{2(\eta_1+\beta_1)},\end{align}
  then the optimal liquidation problem has finite optimum and there exist unique globally optimal controls $(v^*, L^*)$.
\end{proposition}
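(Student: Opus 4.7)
The proposition concerns the pointwise supremum over $(v,L)$ appearing in the HJB equation~\eqref{eq:HJBeq}, whose integrand $J(t,x,v,L)$ is, by the explicit expression in~\eqref{eq:Jdefn}, a quadratic polynomial in $(v,L)$ with coefficients depending on $(t,x)$ only through $V_x(t,x)$ and $V_{xx}(t,x)$. The plan is therefore to reduce the optimality question to the negative-definiteness of the Hessian of this quadratic.

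First, I would collect the pure quadratic and cross terms of $J$ and write its Hessian with respect to $(v,L)$ as
\begin{equation}
H(t,x) \,=\, \begin{pmatrix} -2(\eta_1+\beta_1) & \alpha-\eta_1-\eta_2\\[2pt] \alpha-\eta_1-\eta_2 & -2(\eta_2+\beta_2)+m_1^2\bigl(V_{xx}(t,x)+\gamma\bigr)\end{pmatrix}.
\end{equation}
Since $(v,L)\mapsto J(t,x,v,L)$ is a quadratic polynomial, $\sup_{v,L} J(t,x,v,L)$ is finite and attained at a \emph{unique} maximizer if and only if $H(t,x)$ is negative definite; in that case the first-order conditions $J_v=J_L=0$ are a non-degenerate $2\times 2$ linear system that uniquely determine $(v^*,L^*)$ as affine functions of $V_x(t,x)$ and $V_{xx}(t,x)$.

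Second, I would apply Sylvester's criterion to $-H(t,x)$. The upper-left principal minor gives $2(\eta_1+\beta_1)>0$, which holds unconditionally. The determinant condition $\det H(t,x)>0$ reads
\begin{equation}
4(\eta_1+\beta_1)\Bigl[(\eta_2+\beta_2)-\tfrac{m_1^2}{2}(V_{xx}+\gamma)\Bigr] \,>\, (\alpha-\eta_1-\eta_2)^2,
\end{equation}
which, after dividing by $2(\eta_1+\beta_1)>0$ and rearranging, is exactly $m_1^2(V_{xx}+\gamma)<C$ with $C$ as in~\eqref{CC}. This is precisely the second-order condition~\eqref{optimalcond}. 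Consequently, under~\eqref{optimalcond}, $J$ is strictly concave in $(v,L)$, so the supremum is finite and the unique maximizer $(v^*,L^*)$ is obtained by solving the first-order system.

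There is no genuine obstacle here: the entire argument is the standard strict-concavity criterion for a quadratic in two variables. The only care needed is bookkeeping — correctly reading off the coefficients of $v^2$, $L^2$, and $vL$ from~\eqref{eq:Jdefn} (noting that the $L^2$ coefficient inherits the $V_{xx}$-dependence through the $\tfrac12 m^2(L)V_{xx}$ term) and verifying the algebraic simplification that produces the closed-form expression for $C$ in~\eqref{CC}.
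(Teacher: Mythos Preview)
Your proposal is correct and follows essentially the same approach as the paper: both arguments identify $J$ as a bivariate quadratic in $(v,L)$, write down its Hessian, and show that the second-order condition~\eqref{optimalcond} is equivalent to negative definiteness of that Hessian. The only cosmetic difference is that the paper verifies negative definiteness by computing the eigenvalues explicitly and requiring the larger one to be negative, whereas you use Sylvester's criterion on $-H$; the resulting inequality is identical.
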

\begin{proof}
 The function $J$  in \eqref{eq:HJBeq} is a bivariate quadratic function in $v$ and $L$, so the first-order conditions for the supremum in  \eqref{eq:HJBeq} are  a pair of linear equations.   To facilitate the presentation, we define   
\begin{align}
\psi(t,x)&:=m_1^2(V_{xx}(t,x)+\gamma),\label{psidef}\\
\textbf{A}&:= \left( \begin{array}{cc}
-2\eta_1-2\beta_1 & \alpha-\left(\eta_1+\eta_2\right) \\
\alpha-\left(\eta_1+\eta_2\right) & \psi(t,x)-2\eta_2-2\beta_2  \end{array} \right). \label{matrixA}
\end{align} 
where $\textbf{A}$ is the Hessian matrix  of $J$ w.r.t. $v$ and $L$ and its dependence on $t$ and $x$ has been suppressed. Then the first-order conditions    are
\[ \textbf{A}\left(\begin{array}{c} v\\L \end{array}\right)=\left(\begin{array}{c} V_x+\eta_0\\ V_x+\eta_0-m_0m_1\left(V_{xx}(t,x)+\gamma\right)-\rho\sigma m_1 \end{array}\right).\]
Furthermore, if $\textbf{A}$ is \emph{negative definite} for all $(t,x)$, then the optimal execution problem has a finite solution uniquely determined by these first-order conditions.

To check the negative definiteness of the Hessian, we calculate its eigenvalues. The characteristic equation is
\begin{align}
\det(\textbf{A}-u\textbf{I})=(2\eta_1+2\beta_1+u)(2\eta_2+2\beta_2-\psi(t,x)+u)-\left(\eta_1+\eta_2-\alpha\right)^2=0.
\end{align}
The solutions correspond to the two  eigenvalues, $u_{\pm}$, given by \begin{align}
u_{\pm}=-\eta_1-\eta_2-\beta_1-\beta_2+\frac{\psi(t,x)}{2}\pm\sqrt{\left(\frac{\psi(t,x)}{2}+\eta_1-\eta_2+\beta_1-\beta_2\right)^2+\left(\eta_1+\eta_2-\alpha\right)^2}.
\end{align} 
For $u_+<0$, we must have
\begin{align}
\psi(t,x)<\frac{4(\eta_1\eta_2+\beta_1\beta_2+\eta_1\beta_2+\eta_2\beta_1)-\left(\eta_1+\eta_2-\alpha\right)^2}{2(\eta_1+\beta_1)}=:C\label{condineq}
\end{align} 
for all $(t,x)$, which is equivalent to condition \eqref{optimalcond}. Next, since $u_-\le u_+$, $u_+<0$ implies $u_-<0$. As a result, if the controls $v$ and $L$ that solve the first-order conditions yield a solution $V$ to the HJB equation such that \eqref{condineq} holds (or equivalently \eqref{optimalcond} after rearrangement), rearrangement, then the optimal liquidation problem has a finite optimum. Moreover, the first-order conditions have a unique solution. Therefore, there exist unique globally optimal controls.\end{proof}


%
\begin{remark}
We will consider in Section \ref{sec:constant_unc} the special  case with  constant uncertainty ($m_1=0$). Then, condition \eqref{optimalcond}   in Proposition \ref{prop:optcond} no longer depends on $(t,x)$ and can be simplified as $C>0$. In Section \ref{sec:lagrange} we discuss the practical consequences  of this condition. As we will derive the value function $V$ explicitly, we can directly verify condition \eqref{optimalcond}.
\end{remark}

Given the above proposition, we know there exist unique globally optimal controls under the condition \eqref{optimalcond}. We proceed now to find those controls. For simplicity, we denote the determinant of A by $\Delta(t,x)$. This is given by: 
\begin{align}
\Delta(t,x)=(2\eta_1+2\beta_1)(2\eta_2+2\beta_2-\psi(t,x))-\left(\eta_1+\eta_2-\alpha\right)^2.
\end{align}
With this notation, the solutions to the first-order conditions are
\begin{equation}\label{eq:optimzers}
\begin{aligned}
v_t^*&=\frac{(m_1^2(V_{xx}+\gamma)+\eta_1-\eta_2-2\beta_2-\alpha)(V_x+\eta_0)-(\eta_1+\eta_2-\alpha)(m_0m_1(V_{xx}+\gamma)+\rho\sigma m_1)}{\Delta},\\
L_t^*&=\frac{(\eta_2-\eta_1-\alpha-2\beta_1)(V_x+\eta_0)+2(\eta_1+\beta_1)(m_0m_1(V_{xx}+\gamma)+\rho\sigma m_1)}{\Delta}. 
\end{aligned}
\end{equation}
Substituting $(v^*, L^*)$    into the HJB equation, we arrive at  the following nonlinear PDE:
\begin{equation}\label{eq:Affine_PDE}
\begin{aligned}
0&=V_t+\mu x+\frac{m_0^2}{2}(V_{xx}+\gamma)+\rho\sigma m_0+\beta_1R_1+\beta_2R_2+\frac{(V_x+\eta_0)^2}{4(\eta_1+\beta_1)}\\
&+\frac{2(\alpha+\beta_1+\beta_2)-C}{2\Delta}\left[V_x+\eta_0+\frac{(\eta_2-\eta_1-\alpha-2\beta_1)(m_0m_1(V_{xx}+\gamma)+\rho\sigma m_1)}{2(\alpha+\beta_1+\beta_2)-C}\right]^2,
\end{aligned}
\end{equation}
subject to the terminal condition $V(t,x)=\left(\frac{\gamma}{2}-\beta\right)x^2$. 

The quadratic terminal condition suggests the ansatz $V(t,x)=a(t)x^2+b(t)x+c(t)$. This ansatz will solve equation \eqref{eq:Affine_PDE} as long as the coefficient functions solve the following system of first-order ODEs:
\begin{equation}\label{eq:ODE_system}
\begin{aligned}
0&=a^\prime(t)+\frac{2(\alpha+\beta_1+\beta_2)-m_1^2(2a(t)+\gamma)}{(\eta_1+\beta_1)(C-m_1^2(2a(t)+\gamma))}\,a^2(t), \qquad\qquad\qquad a(T)=\frac{\gamma}{2}-\beta, \\
0&=b^\prime(t)+\mu+\frac{2(\alpha+\beta_1+\beta_2)-m_1^2(2a(t)+\gamma)}{(\eta_1+\beta_1)(C-m_1^2(2a(t)+\gamma))}a(t)(b(t)+\eta_0), \\
&+\frac{(\eta_2-\eta_1-\alpha-2\beta_1)(m_0m_1(2a(t)+\gamma)+\rho\sigma m_0)}{(\eta_1+\beta_1)(C-m_1^2(2a(t)+\gamma))}a(t),\qquad\qquad b(T)=0,\\
0&=c^\prime(t)+\frac{m_0^2}{2}(2a(t)+\gamma)+\rho\sigma m_0+\beta_1R_1+\beta_2R_2\\
&+\frac{2(\alpha+\beta_1+\beta_2)-m_1^2(2a(t)+\gamma)}{4(\eta_1+\beta_1)(C-m_1^2(2a(t)+\gamma))}(b(t)+\eta_0)^2\\
&+\frac{(\eta_2-\eta_1-\alpha-2\beta_1)(m_0m_1(2a(t)+\gamma)+\rho\sigma m_0)}{2(\eta_1+\beta_1)(C-m_1^2(2a(t)+\gamma))}(b(t)+\eta_0)\\
&+\frac{(\eta_2-\eta_1-\alpha-2\beta_1)^2(m_0m_1(2a(t)+\gamma)+\rho\sigma m_0)^2}{8(\eta_1+\beta_1)(C-m_1^2(2a(t)+\gamma))(2(\alpha+\beta_1+\beta_2)-C)},\qquad\qquad c(T)=0.\\
\end{aligned}
\end{equation} These ODEs can be solved numerically successively. An analytic strategy would be to first solve for $a(t)$ by separation of variables. Then we can plug $a(t)$ into the ODE for $b(t)$. The resulting  first-order linear inhomogeneous ODE is readily solved. In turn, given $a(t)$ and $b(t)$, the third ODE is separable and solved directly by integration.   In the next sections, we will solve these equations analytically or numerically to present  the solutions and associated trading strategies.

Given that $V(t,x)=a(t)x^2+b(t)x+c(t)$, we can now simplify the second-order condition \eqref{optimalcond} and express it in terms of $a(t)$ and  model parameters:
\begin{equation}
	2m_1^2 a(t)< {{C}-\gamma m_1^2}, \qquad \forall (t,x).
\end{equation}
At    this point, $a$ satisfies an implicit equation, so  it is not possible to simplify the condition further. Nevertheless, under constant uncertainty, it will simplify significantly  as $m_1=0$. The linear uncertainty model will also allow for an analytical solution for $a(t)$.

\subsection{Numerical Illustration}\label{sec:numerics_first}
Before displaying the numerical implementation of our model, we discuss parameter choices. For this paper, we are thinking about trades that will be executed within a few hours at very high speeds. Therefore, it makes sense  to consider \emph{seconds} as the time interval and we use $T=3,600$ seconds in all simulations. For other parameters, we assume 6.5 trading hours per day and 252 trading days per year and select parameter values of the same magnitude as those found in \cite{AC2000}:
\begin{alignat}{2}
 S_0&= 40\,\$/\text{share},&\quad x_0 &= 10,000\,\text{share},\\
\sigma&= 0.005\,(\$/\text{share})/\text{sec.}^{0.5}, &\quad \mu &= 10^{-6}\,(\$/\text{share})/\text{sec.},\\
\gamma&= 2.5\cdot10^{-7}\,\$/\text{share}^2,&\quad \eta_0 &= 0.05\,\$/\text{share},\\
\eta_1 &= 0.1\,(\$/\text{share})/(\text{share}/\text{sec.}),&\quad
\eta_2 &= 0.08\,(\$/\text{share})/(\text{share}/\text{sec.}).
\end{alignat}
Note that converting with 6.5 trading hours per day and 252 trading days per year, we find that this value of $\sigma$ and initial stock price correspond to approximately $30\%$ annual volatility. The value of $\mu$ corresponds to approximately $15\%$ annual growth.

A small correlation value, $\rho=-0.2$, along with positive values of $m_0$ and $m_1$, are selected to include the  adverse selection effect. The uncertainty parameters are chosen as $m_0=p_0 {x_0}{{T}^{-0.5}}$, and $m_1=p_1\sqrt{T}$ for some (unitless) constants $p_0$ and $p_1$. If we want to compare the constant and linear uncertainty models then $p_0=p_1$ correspond to similar coefficients on the Brownian motion driving $x_t$. This is because $m_1$ will multiply $L_t$, which will roughly be of order $ {x_0}/{T}$.  Under  constant or linear uncertainty we set $p_0=p_1=0.1$,  whereas under affine uncertainty  we choose $p_0=p_1=0.05$ to  split the uncertainty between the two components. 

The terminal liquidation penalty will be $\beta=10^{-3}\,\$/\text{share}^2$. For the speed limiter we choose $\beta_1=5\cdot10^{-4}\,\$/(\text{share}/\text{sec.})^2$, and $\beta_2=10^{-4}\,\$/(\text{share}/\text{sec.})^2$. These are small penalties so they will not over-reduce the trade speed and they are of the same order as the terminal liquidation penalty. Finally, the trade director penalty is chosen as $\alpha=0.15$. In a subsequent section, we will find an allowable interval for $\alpha$. For the above parameter values, 0.15 is in the middle of the interval, just to the left of the center so it does not over-encourage high trade speeds. Throughout our numerical simulations, these will be the baseline parameters. We highlight the effects of each parameter by changing them from these baseline values in our numerical examples.

In  Figure \ref{fig:lin_and_const}, we plot  the trading rates for models with constant uncertainty, linear uncertainty, and zero uncertainty in the limit order fills. To ensure   a fair comparison, the Brownian motions generating the asset dynamics and the uncertainty in position are the same across all simulations. Only the functional forms of the trading rates change. Note that   both rates are positive for all three models. Furthermore, the strategies are primarily dominated by limit orders. This is preferred in practice because limit orders tend to be lower impact and lower cost. Indeed, that is the selected parameter choice. For the most part, over the life of the 1 hour sell program, the trading rates  are almost constant when there is no uncertainty, and remain quite  stable in the uncertainty cases. As we approach the execution horizon the rates become more unstable: the algorithm reacts to small moves to capitalize on low cost opportunities to liquidate the asset.

Nevertheless, over the life of a trade, the linear uncertainty trading rates are typically more stable and lower than the constant uncertainty trading rates.  To see this, notice that   the linear uncertainty can be avoided by choosing $L_t=0$, but this is not possible in the constant uncertainty case.  In this sense, there is less uncertainty for a linear uncertainty model versus a constant uncertainty model. This helps explain   the ordering of the trading rates. As   the uncertainty diminishes to no uncertainty, the future is more predictable, so the trader does not need to resort to overtrading. Finally, note that the  trading rates under linear and constant uncertainties tend to spike up significantly towards the end. The reason is that the non-liquidation penalty dominates all other costs, and thus, trading is sped up to achieve full liquidation. Again, even in this highly volatile period, linear uncertainty results in  lower trading rates than constant uncertainty. Interestingly, it is observed that the market order rate exceeds the limit order rate for the first time near the end of the trading horizon under linear uncertainty. The same does not occur under  constant uncertainty.  This is because under linear uncertainty limit order fill uncertainty can be eliminated by avoiding placing such orders, and the trader who wants to achieve full liquidation turns to market orders, which do not have fill uncertainty. 

Our method encourages non-negativity of trading rates, and  the trader tends to maintain $v_t>0$ and $L_t>0$ for the majority of the sell-program. However, this need not be the case in general and trading rates have the potential to be negative. Later, in section \ref{sec:cons_strats} we will show that non-negative trading rates can be guaranteed in special cases.


\begin{figure}[H] \centering
	\subfigure[]{\includegraphics[trim=10   0  32  8,clip,width=3in]{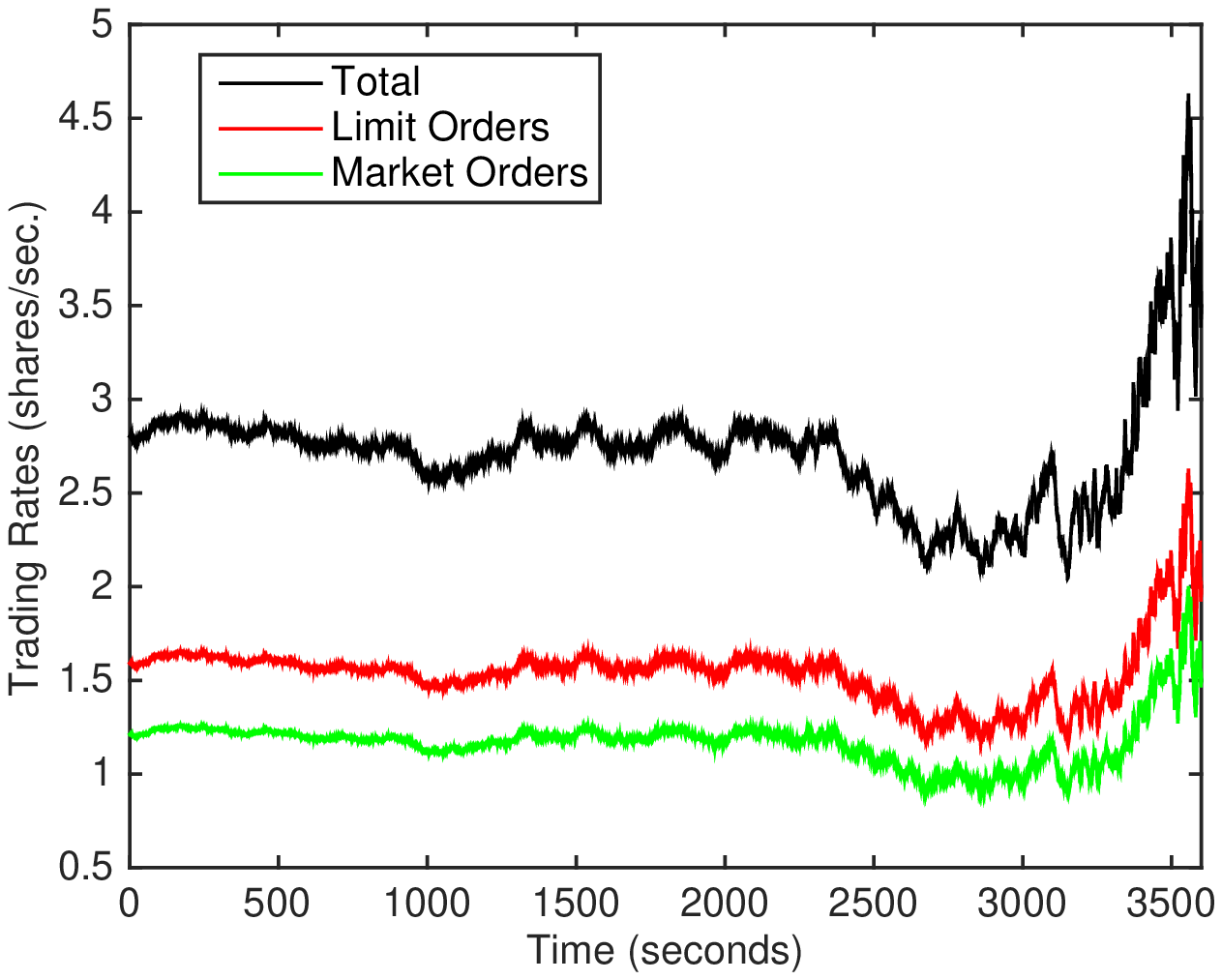}}
	\subfigure[]{\includegraphics[trim=10   0  32  8,clip,width=3in]{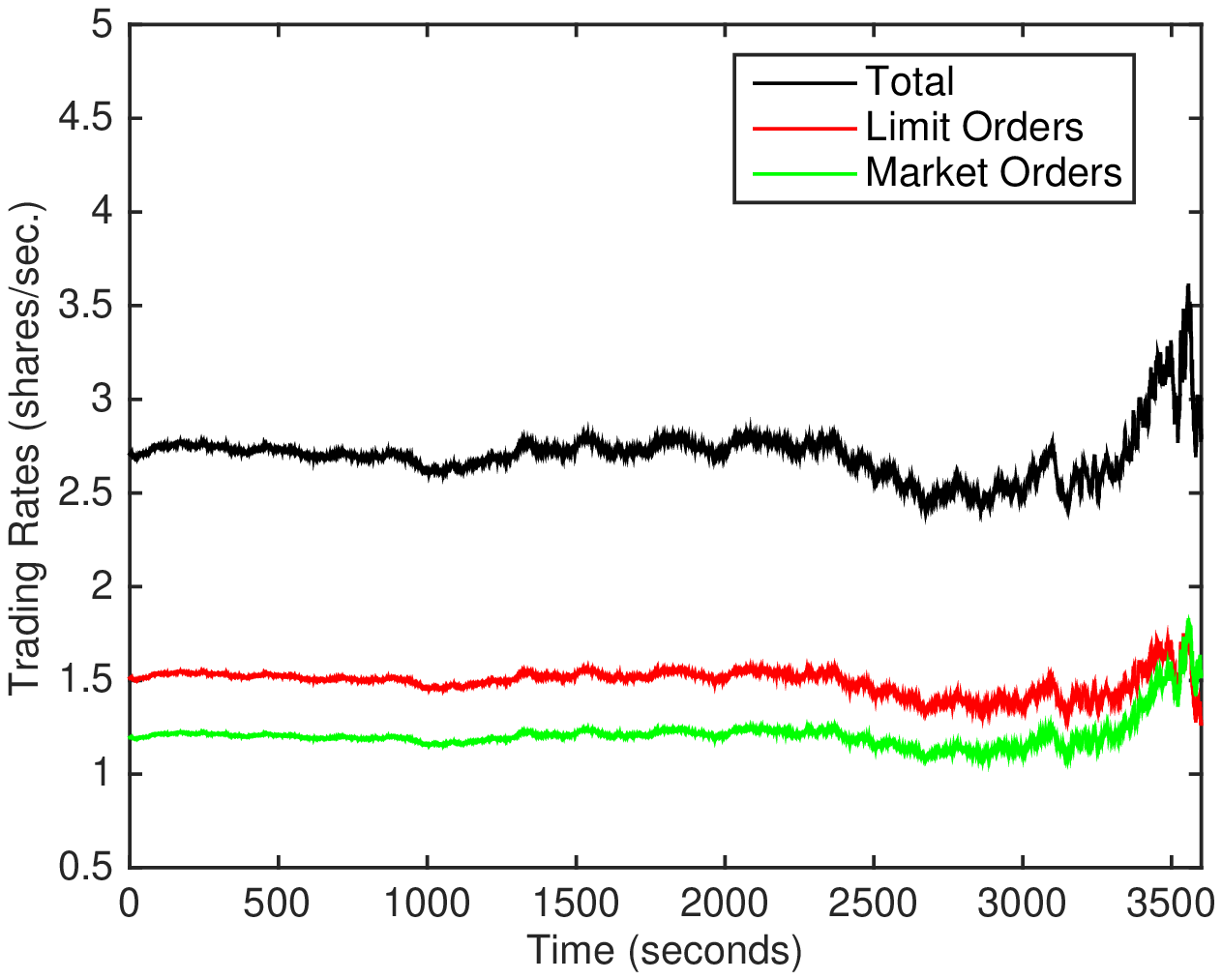}}
	\subfigure[]{\includegraphics[trim=10   0  32  8,clip,width=3in]{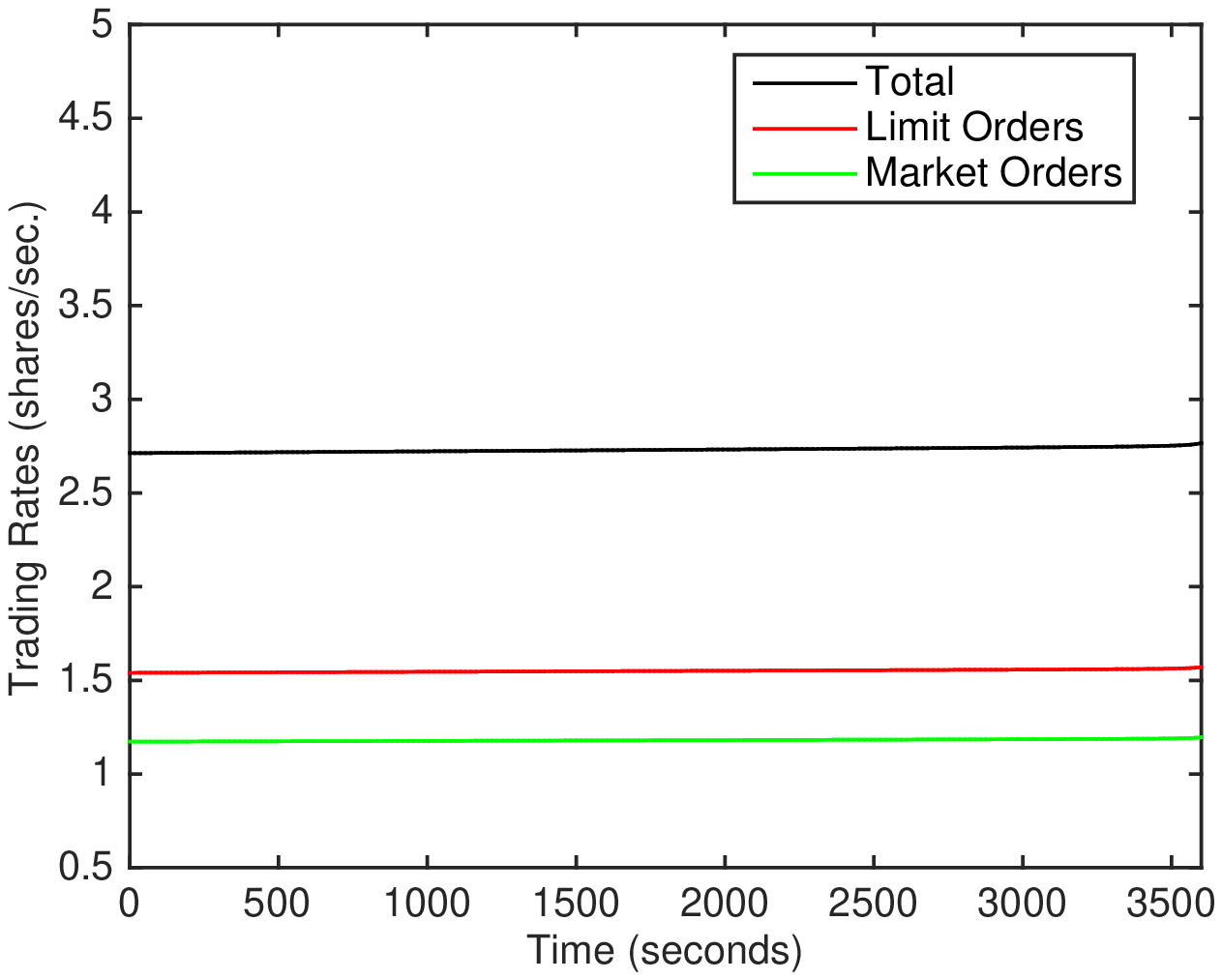}}
	\caption{\small{Sample paths of the  optimal trading rates under (a) constant uncertainty ($m_0=16.\bar{6}$), (b) linear uncertainty ($m_1=6$), and (c) no uncertainty. Final positions are 217.25, 203.39, and 168.75 shares, respectively. The parameters are $x_0=10,000$, $T=3,600$, $\beta=10^{-3}$, $\eta_0=0.05$, $\gamma=2.5\cdot10^{-7}$, $\alpha=0.15$, $\beta_1=5\cdot10^{-4}$, $\beta_2=10^{-4}$, $\eta_1=0.1$, $\eta_2=0.08$, $\rho=-0.2$, $\mu=10^{-6}$, and $\sigma=0.005$.}}\label{fig:lin_and_const}
\end{figure}

\begin{figure}[H] \centering 
	\subfigure{\includegraphics[trim=10   0  32  8,clip,width=3in]{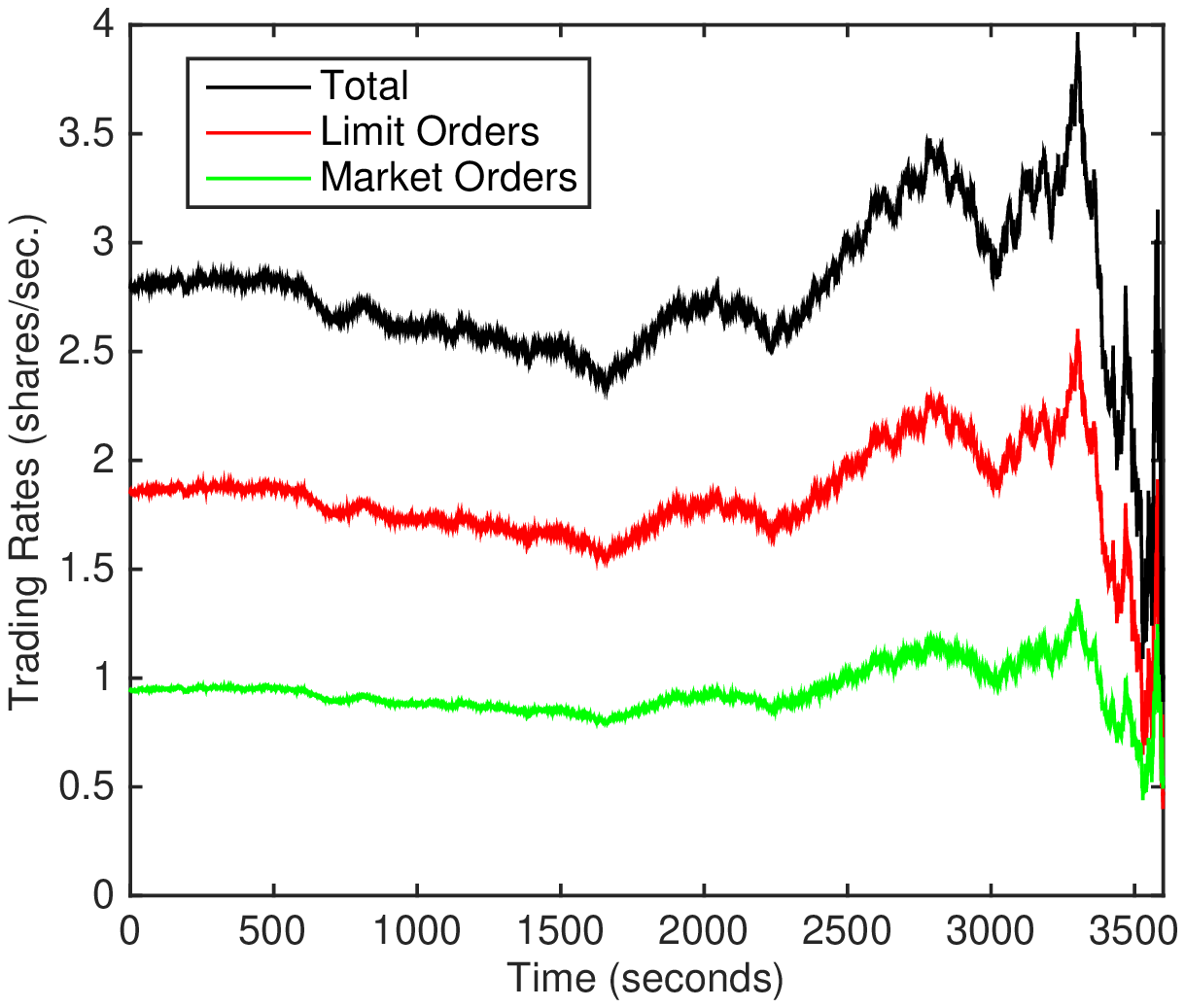}}
	\subfigure{\includegraphics[trim=10   0  32  8,clip,width=3in]{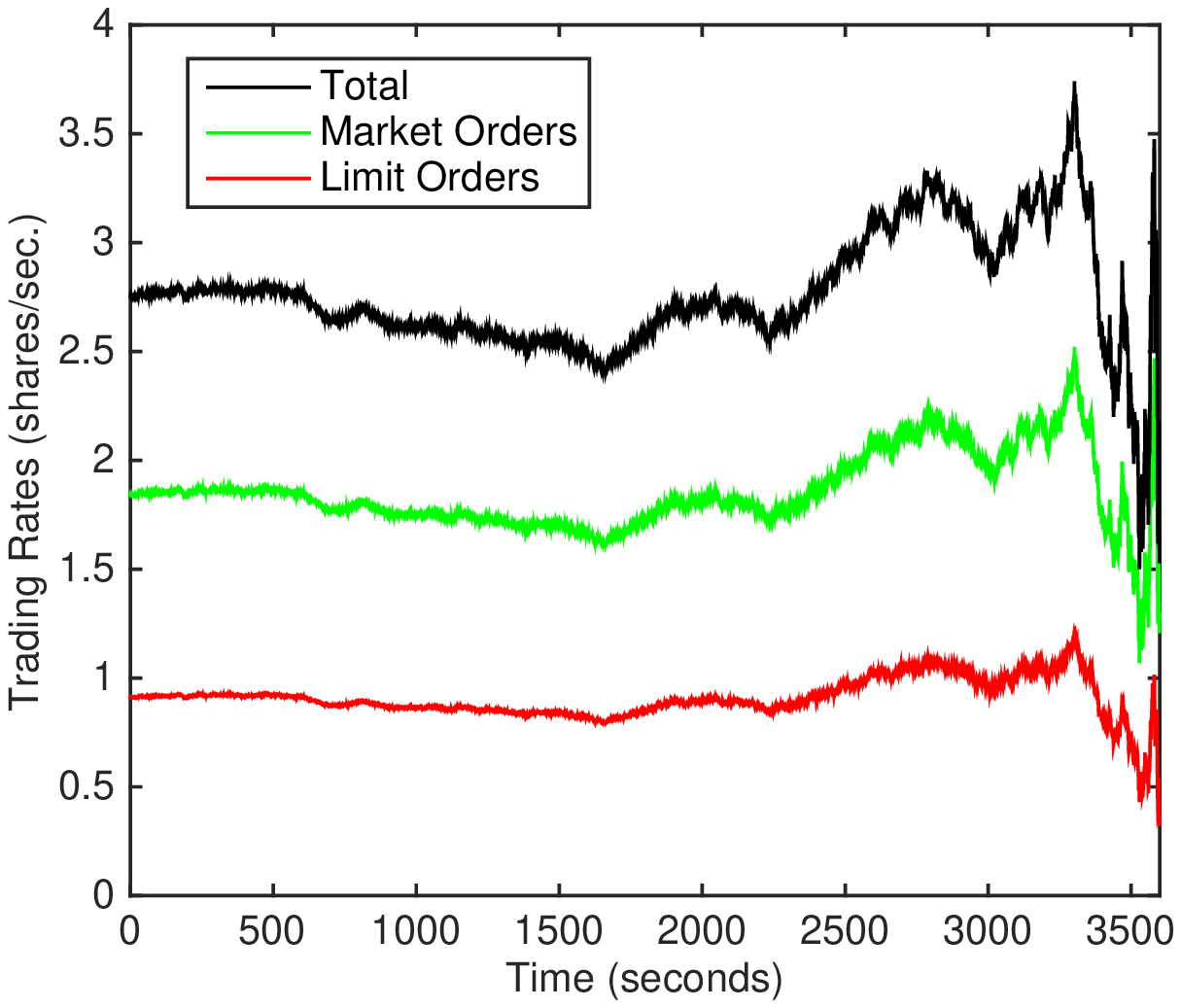}}
	\caption{\small{The optimal trading rates under one simulation with the following parameters: $x_0=10,000$, $T=3,600$, $\beta=10^{-3}$, $\eta_0=0.05$, $\gamma=2.5\cdot10^{-7}$, $\alpha=0.15$, $\beta_1=5\cdot10^{-4}$, $\beta_2=10^{-4}$, $\rho=-0.2$, $\mu=10^{-6}$, $\sigma=0.005$, $m_0=8.\bar{3}$ and $m_1=3$. On the left,  $\eta_1=0.1$, and $\eta_2=0.05$, while on the right  $\eta_1=0.05$, and $\eta_2=0.1$. Final positions are 68.5384, and 80.2371 shares, respectively.}}\label{fig:changing_Impact}
\end{figure}

Next, we look at the effect of   market impact  in Figure \ref{fig:changing_Impact}. We display the simulated trading rates based on  different values of market impact coefficients $(\eta_1, \eta_2)$. On the left panel, market orders have higher temporary market impact ($\eta_1=0.1$, and $\eta_2=0.05$), and on the right, limit orders have higher temporary market impact ($\eta_1=0.05$, and $\eta_2=0.1$). 
As seen in both scenarios, the trading rate is higher for the order type with the lower market impact cost. In practice, market orders are expected to have a higher market impact so $\eta_1>\eta_2$ is the more realistic setting. 

The sample path for stock holdings over time is shown in Figure \ref{fig:variedBeta}. Our trading strategy appears to follow a time-weighted average price (TWAP) strategy. This is evident in the linear and decreasing path of holdings over time. A TWAP strategy seeks to trade constantly through time so that the average realized price is the time-weighted price over the execution period. A linear path of holdings indicates that the total trading rate is approximately constant. In contrast to TWAP, the optimal strategy appears to have a non-zero terminal target for $x_T$. A higher $\beta$ leads the trader to sell more rapidly, and the position ends much closer to zero. Furthermore, the position is generally decreasing, but we remark that even if $v_t>0$, and $L_t>0$, $x_t$ may increase temporarily due to the Brownian motion movement.

\begin{figure}[H]
	\begin{centering}
	\includegraphics[trim=15   0  32  8,clip,width=4in]{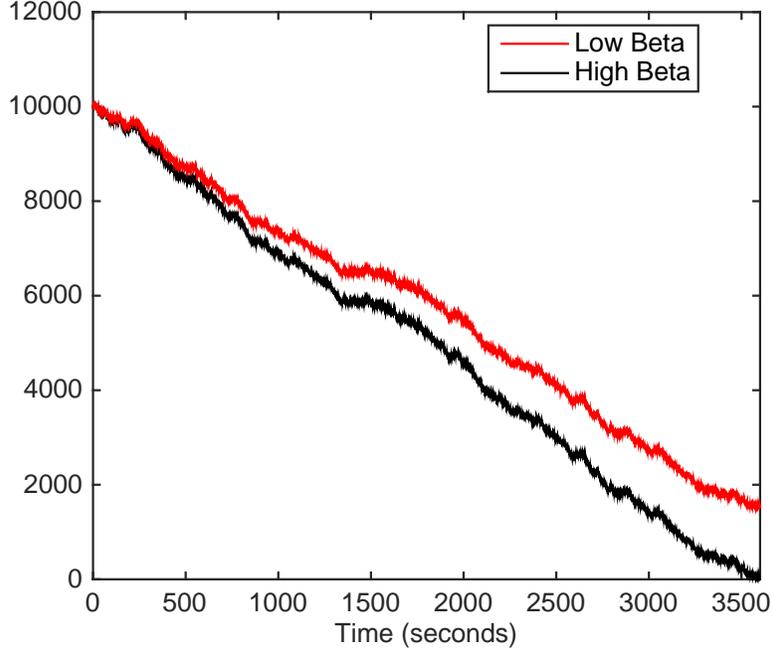}
		\caption{\small{Sample path of stock holdings with different non-liquidation penalty. The parameters are: $x_0=10,000$, $T=3,600$, $\eta_0=0.05$, $\gamma=2.5\cdot10^{-7}$, $\alpha=0.15$, $\beta_1=5\cdot10^{-4}$, $\beta_2=10^{-4}$, $\rho=-0.2$, $\mu=10^{-6}$, $\sigma=0.005$, $\eta_1=0.1$, $\eta_2=0.08$, $m_0=8.\bar{3}$ and $m_1=3$. The low value of $\beta$ is $10^{-4}$ and the high value is $0.1$. Final positions are 1546.0197 and 5.3967, respectively.}}
		\label{fig:variedBeta}
	\end{centering}
\end{figure}

 \section{Constant Uncertainty of Limit Orders}\label{sec:constant_unc}
In this section, we discuss a number of properties of our model under   constant uncertainty of limit orders fills. Recall that constant uncertainty means the total position, $x_t$ is subject to uncertainty and it is not simply due to trading in limit orders. Since the risk in stock holdings cannot be avoided, the tradeoff in choosing market and limit orders is between explicit market impact costs and various implicit costs. The latter costs comprise of the trading penalties set by the trade director and speed limiter discussed in Section \ref{sec:model_order_fills}. Another way to view this section is that $v_t$ could be a trading rate in one exchange, while $L_t$ is a trading rate in another exchange. The two venues have different costs of trading  in terms of market impact, and the total position is subject to risk, measured by constant uncertainty. 

\subsection{Trade Direction-Speed   Trade-off}\label{sec:lagrange}
First,  we must check   the second-order  condition in \eqref{optimalcond}. Since we have $m_1=0$, so the inequality  simplifies to $C>0$, or equivalently
\begin{equation}
\begin{aligned}\label{conditionconstantcase}
\left(\eta_1+\eta_2-\alpha\right)^2-4(\eta_1\eta_2+\beta_1\beta_2+\eta_1\beta_2+\eta_2\beta_1)<0.
\end{aligned}
\end{equation}
Since $\eta_1$, and $\eta_2$ are exogenous parameters to be inferred from market data,    condition \eqref{conditionconstantcase} becomes a restriction on the Lagrange multipliers, $\alpha$, $\beta_1$ and $\beta_2$. Suppose that we fix $\beta_1$ and $\beta_2$. Then the left-hand side  of \eqref{conditionconstantcase} is a convex quadratic function of $\alpha$. Thus, there are 2 roots, and if $\alpha$ is between them, the optimal control problem has a finite solution given uniquely by the first-order conditions. This results in the admissible range for $\alpha$:
\begin{equation}
\begin{aligned}
\alpha\in\left(\eta_1+\eta_2-2\sqrt{\eta_1\eta_2+\beta_1\beta_2+\eta_1\beta_
	2+\eta_2\beta_1},\eta_1+\eta_2+2\sqrt{\eta_1\eta_2+\beta_1\beta_2+\eta_1\beta_2+\eta_2\beta_1}\right).
\end{aligned}
\end{equation}
This leads to our first trade-off. The trader director's Lagrange multiplier, $\alpha$, ensures our order types go in the same direction. On the other hand, the Lagrange multipliers, $\beta_1$ and $\beta_2$, tend to reduce the values of $v_t$ and $L_t$ since high values are less preferable. However, increasing $\beta_i$ widens the allowable interval for $\alpha$. In particular, when $\beta_i=0$, the interval is $\left(\eta_1+\eta_2-2\sqrt{\eta_1\eta_2},\eta_1+\eta_2+2\sqrt{\eta_1\eta_2}\right)$. As the arithmetic mean of a set of numbers is greater than its geometric mean, the left endpoint is positive, which renders  $\alpha=0$ an invalid choice. However, if $\beta_i$ is sufficiently large, then the  interval may include $\alpha=0$. This  discussion reveals the interesting     trade-off between over-reducing trade speed and correctly setting trade direction in  choosing the endogenous parameters, $\alpha$, $\beta_1$ and $\beta_2$. In summary, the  trade director cannot be too strong or too weak unless the speed limiters are sufficiently severe. 

\subsection{Optimal Strategies}\label{sec:cons_strats}
We now consider the associated trading strategies. First, we must solve the ODE system:
\begin{equation}
\begin{aligned}
0&=a^\prime(t)+\frac{2(\alpha+\beta_1+\beta_2)}{(\eta_1+\beta_1)C}a^2(t),\\
0&=b^\prime(t)+\mu+\frac{2(\alpha+\beta_1+\beta_2)}{(\eta_1+\beta_1)C}a(t)(b(t)+\eta_0)+\frac{(\eta_2-\eta_1-\alpha-2\beta_1)\rho\sigma m_0}{(\eta_1+\beta_1)C}a(t),\\
0&=c^\prime(t)+\frac{m_0^2}{2}(2a(t)+\gamma)+\rho\sigma m_0+\beta_1R_1+\beta_2R_2+\frac{(\alpha+\beta_1+\beta_2)}{2(\eta_1+\beta_1)C}(b(t)+\eta_0)^2\\
&+\frac{(\eta_2-\eta_1-\alpha-2\beta_1)\rho\sigma m_0}{2(\eta_1+\beta_1)C}(b(t)+\eta_0)+\frac{(\eta_2-\eta_1-\alpha-2\beta_1)^2\rho^2\sigma^2 m_0^2}{8(\eta_1+\beta_1)C(2(\alpha+\beta_1+\beta_2)-C)},\\
\end{aligned}
\end{equation}
with terminal conditions $a(T)=\frac{\gamma}{2}-\beta$, $b(T)=c(T)=0$.  By separation of variables, we obtain the explicit solution for $a(t)$:
\begin{equation*}
\begin{aligned} 
a(t)=-\frac{(\eta_1+\beta_1)\left(2\beta-{\gamma}\right)C}{2(\alpha+\beta_1+\beta_2)\left(2\beta-{\gamma}\right)(T-t)+2(\eta_1+\beta_1)C}\,.
\end{aligned}
\end{equation*}
The function $a(t)$ is well  defined everywhere except for 
\begin{equation}\label{eq:bad_point}
\begin{aligned} 
t=T+\frac{(\eta_1+\beta_1)C}{(\alpha+\beta_1+\beta_2)\left(2\beta-\gamma\right)}=:T_{crit}.
\end{aligned}
\end{equation}
Suppose that we choose  $\beta$ to be  large in order to penalize non-liquidation, and specifically let us consider $\beta>\frac{\gamma}{2}$.  Then, the second term in \eqref{eq:bad_point} is positive, and $T_{crit}$ is never reached as it is beyond the execution horizon, $T$, of the sell program. With this definition, $a(t)$ simplifies to
\begin{equation*}
\begin{aligned} 
a(t)=-\frac{(\eta_1+\beta_1)C}{2(\alpha+\beta_1+\beta_2)\left(T_{crit}-t\right)}.
\end{aligned}
\end{equation*}
Since, $t\in[0,T]\subsetneq[0,T_{crit}]$, we conclude $a(t)<0$ for $t\in[0,T]$, which implies that $V_{xx}(t,x)<0$ for all $(t,x)\in [0,T]\times\R$. 

To solve for $b(t)$, we divide the second ODE by $a(t)$,  and rearrange to get
\begin{equation}\label{eq:b_constODE}
	\begin{aligned} 
		\frac{b^\prime(t)}{a(t)}+\frac{2(\alpha+\beta_1+\beta_2)}{(\eta_1+\beta_1)C}(b(t)+\eta_0)=-\frac{\mu}{a(t)}-\frac{(\eta_2-\eta_1-\alpha-2\beta_1)\rho\sigma m_0}{(\eta_1+\beta_1)C}.\\
	\end{aligned}
\end{equation}
By using the product rule followed by the ODE for $a(t)$, we have the following identity:
\begin{equation}\label{eq:ODE_id}
	\begin{aligned} 
		\frac{d}{dt}\left[\frac{b(t)+\eta_0}{a(t)}\right]=\frac{b^\prime(t)}{a(t)}-\frac{\left(b(t)+\eta_0\right)a^\prime(t)}{a^2(t)}=\frac{b^\prime(t)}{a(t)}+\frac{2(\alpha+\beta_1+\beta_2)}{(\eta_1+\beta_1)C}(b(t)+\eta_0).
	\end{aligned}
\end{equation}
Define
\begin{equation*}
\begin{aligned} 
b_0(t):=\frac{(\alpha+\beta_1+\beta_2)\mu}{(\eta_1+\beta_1)C}\left[(T_{crit}-t)^2-(T_{crit}-T)^2\right]-\frac{(\eta_2-\eta_1-\alpha-2\beta_1)\rho\sigma m_0(T-t)}{(\eta_1+\beta_1)C}
\end{aligned}
\end{equation*}
to be the integral of the right-hand side function in \eqref{eq:b_constODE} from $t$ to $T$. Then, $b(t)$ is explicitly solved as follows:
\begin{equation*}
	\begin{aligned} 
		\frac{d}{dt}\left[\frac{b(t)+\eta_0}{a(t)}\right]&=-\frac{\mu}{a(t)}-\frac{(\eta_2-\eta_1-\alpha-2\beta_1)\rho\sigma m_0}{(\eta_1+\beta_1)C}\\
		\implies~b(t)&=-\eta_0-a(t)\left[\frac{2\eta_0}{2\beta-\gamma}+b_0(t)\right].
	\end{aligned}
\end{equation*}
Finally, $c(t)$ can be computed  via direct integration. As the integral is rather complicated and we will not need its closed-form expression, we omit it. Direct computation shows that the optimal trading rates are
\begin{align}
v_t^*&=\frac{(\eta_1-\eta_2-2\beta_2-\alpha)(2a(t)x_t+b(t)+\eta_0)}{2(\eta_1+\beta_1)C},\\
L_t^*&=\frac{(\eta_2-\eta_1-\alpha-2\beta_1)(2a(t)x_t+b(t)+\eta_0)}{2(\eta_1+\beta_1)C}.
\end{align}
Notice that the optimal trading strategies are both affine in $x_t$ at any time $t$. We will study the sign of the trading rates in the next section.

%
%
%

\subsection{Buy-Sell Boundary}\label{sec:buySellRegions}
In this section, we describe the properties that guarantee non-negativity of the trading rates for constant uncertainty. Naturally, the optimal trading rates should  be non-negative  because this means trading strategies do not go against the overall sell program. At the very least, if one order rate is non-positive, the other should be non-positive to avoid simultaneous buy and sell orders.

For simplicity, let us assume   $\eta_1=\eta_2\equiv\eta$.\footnote{The results in this section still hold if $\eta_1-\eta_2<2\beta_2+\alpha$ and $\eta_2-\eta_1<2\beta_1+\alpha$. So even if $\eta_1\neq\eta_2$, then one of these conditions certainly holds (because $\beta_1,\beta_2,\alpha>0$ and we either have $\eta_1<\eta_2$ or $\eta_2<\eta_1$.) and the other holds if $|\eta_1-\eta_2|$ is small relative to the Lagrange multipliers.} Then the optimal trading rates are
\begin{align} 
v_t^*&=\frac{\left(-2\beta_2-\alpha\right) (2a(t)x_t+b(t)+\eta_0)}{2(\eta+\beta_1)C},\label{vstar1}\\
L_t^*&=\frac{\left(-2\beta_1-\alpha\right) (2a(t)x_t+b(t)+\eta_0)}{2(\eta+\beta_1)C}.\label{Lstart1}
\end{align}
It follows that $v_t^*, L_t^*>0$ if and only if $2a(t)x_t+b(t)+\eta_0<0$, or equivalently
\begin{equation*}
\begin{aligned} 
2a(t)x_t-a(t)\left[\frac{2\eta_0}{2\beta-\gamma}+b_0(t)\right]<0.
\end{aligned}
\end{equation*}
Since $a(t)<0$ for all $t$, this gives the  lower bound as a  time deterministic function, above which the optimal  strategy derived in \eqref{vstar1} and \eqref{Lstart1} simultaneously places sell orders, and below which the optimal order strategy simultaneously places buy orders. Explicitly, the condition is
\begin{equation*}
\begin{aligned} 
x_t>&\, \frac{(T_{crit}-t)^2(\alpha+\beta_1+\beta_2)\mu}{2(\eta+\beta_1)C}-\frac{(T_{crit}-T)^2(\alpha+\beta_1+\beta_2)\mu}{2(\eta+\beta_1)C}\\
&\quad +\frac{(\alpha+2\beta_1)\rho\sigma m_0(T-t)}{2(\eta+\beta_1)C}+\frac{\eta_0}{2\beta-\gamma}.
\end{aligned}
\end{equation*}
We call this lower boundary on the right-hand side the \textit{buy-sell boundary}, and denote it by $P(t)$. 
\begin{example}\label{ex:inpracticeBSB}
Before discussing the general properties of the buy-sell boundary, we will consider a special case. Often in practice, $\mu$ is assumed to be $0$ as it is unknown. Furthermore, setting $\eta_0=0$, the buy-sell boundary becomes
\begin{equation*}
\begin{aligned} 
P(t)=\frac{(\alpha+2\beta_1)\rho\sigma m_0(T-t)}{2(\eta+\beta_1)C},\quad  t\in[0,T].
\end{aligned}
\end{equation*}
With the adverse selection condition $\rho m_0<0$, $P(t)$ is negative $\forall\, t\in[0,T)$ and increases to the value 0 at $t=T$. Thus, the algorithm continues to sell even if the position becomes short and will only buy if the position becomes substantially negative.
\end{example}

The boundary is a quadratic\footnote{The boundary is quadratic as long as $\mu\neq0$. Otherwise, $P(t)$ is linear increasing (resp. decreasing) if $\rho m_0<0$ (resp. $>0$). For constant uncertainty, $\rho m_0<0$ gives us the desired adverse selection effect as discussed. See Example \ref{ex:inpracticeBSB} for more details.} function of time and is convex (resp. concave) if $\mu>0$ (resp. $\mu<0$). The intuition is that when the position size is sufficiently  low, the trader need not worry about non-liquidation as she has plenty of time to fully liquidate. 

To explore its shape properties further, we compute the first derivative of $P$:
\begin{equation*}
\begin{aligned} 
P^\prime(t)=-\frac{(T_{crit}-t)(\alpha+\beta_1+\beta_2)\mu}{(\eta+\beta_1)C}-\frac{(\alpha+2\beta_1)\rho\sigma m_0}{2(\eta+\beta_1)C}.
\end{aligned}
\end{equation*}
Generally, we expect $P(t)$ to be a non-increasing function of time. To understand when this is the case, let us fix ideas and assume $\mu>0$. Then, $P(t)$ is non-increasing for 
\begin{align} \label{wowineq}
t&\le T_{crit}+\frac{(\alpha+2\beta_1)\rho\sigma m_0}{2(\alpha+\beta_1+\beta_2)\mu}\\
&=T+\frac{(\alpha+2\beta_1)\rho\sigma m_0(2\beta-\gamma)+2(\eta+\beta_1)C\mu}{2\mu(\alpha+\beta_1+\beta_2)(2\beta-\gamma)}.
\end{align}


If there is no adverse selection effect, then $\rho m_0\ge0$, and the second term is positive, so $P$ is non-increasing for all $t\in[0,T]$. However, if there is adverse selection, then $\rho m_0<0$, and it is possible that the second term is negative. However as long as
\begin{equation*}
\begin{aligned} 
|\rho| \le\left|\frac{2(\eta+\beta_1)C\mu}{(\alpha+2\beta_1)\sigma(2\beta-\gamma)m_0}\right|,
\end{aligned}
\end{equation*}
then the second term is non-negative and $P(t)$ is non-increasing for all $t$.
In contrast, to require $P$ to be non-decreasing for all $t$ means reversing the inequality \eqref{wowineq}, that is,
\begin{align}
t&\ge T_{crit}+\frac{(\alpha+2\beta_1)\rho\sigma m_0}{2(\alpha+\beta_1+\beta_1)\mu}\\
&=T+\frac{(\alpha+2\beta_1)\rho\sigma m_0(2\beta-\gamma)+2(\eta+\beta_1)C\mu}{2\mu(\alpha+\beta_1+\beta_2)(2\beta-\gamma)}.\label{wow2}
\end{align}
In order to ensure  that $P(t)$ be  non-decreasing for all $t$, we bound the right-hand side of \eqref{wow2} from above by 0 and rearrange the inequality to obtain
\begin{equation*}
\begin{aligned} 
|\rho| \ge\left|\frac{2(\eta+\beta_1)C\mu+ 2T\mu(\alpha+\beta_1+\beta_2)(2\beta-\gamma)}{(\alpha+2\beta_1)\sigma(2\beta-\gamma)m_0}\right|.
\end{aligned}
\end{equation*}

In Figure \ref{fig:buySellThreshSigmaDec}, we display the buy-sell boundary for different parameter values. As $\beta$ governs aversion to non-liquidation, we analyze the effect of increasing $\beta$ on the boundary. In the left panel, we find that the choice of $\rho=-0.2$ from before leads to a generally increasing $P$. As the more intuitive case was a generally decreasing boundary, we also plot this pair of boundaries for $\rho$ close to zero.\footnote{To have the adverse selection effect, we need $\rho$ to be smaller in absolute value.}

We expect that a larger $\beta$ will induce the trader to place sell orders more often than with a smaller $\beta$. Placing buy   orders takes us further from liquidation so the boundary is expected to shift downward. Indeed this can be seen in both panels of Figure \ref{fig:buySellThreshSigmaDec}. As the whole buy-sell boundary shifts downward, there will of course be a decrease in the terminal value $P(T)$, which can be viewed as the \textit{target} the algorithm has for stock holdings at time $T$. In fact, we have
\begin{equation}
\begin{aligned}
P(T)=\frac{\eta_0}{2\beta-\gamma}.
\end{aligned}
\end{equation}
We can see directly that if the non-liquidation penalty coefficient $\beta$ increases, the target $P(T)$ approaches $0$. This is consistent with  our discussion following Figure \ref{fig:variedBeta}.

\begin{figure}[H]
	\begin{centering}
		\subfigure{\includegraphics[trim = 10 0 35 15, clip, width=2.9in]{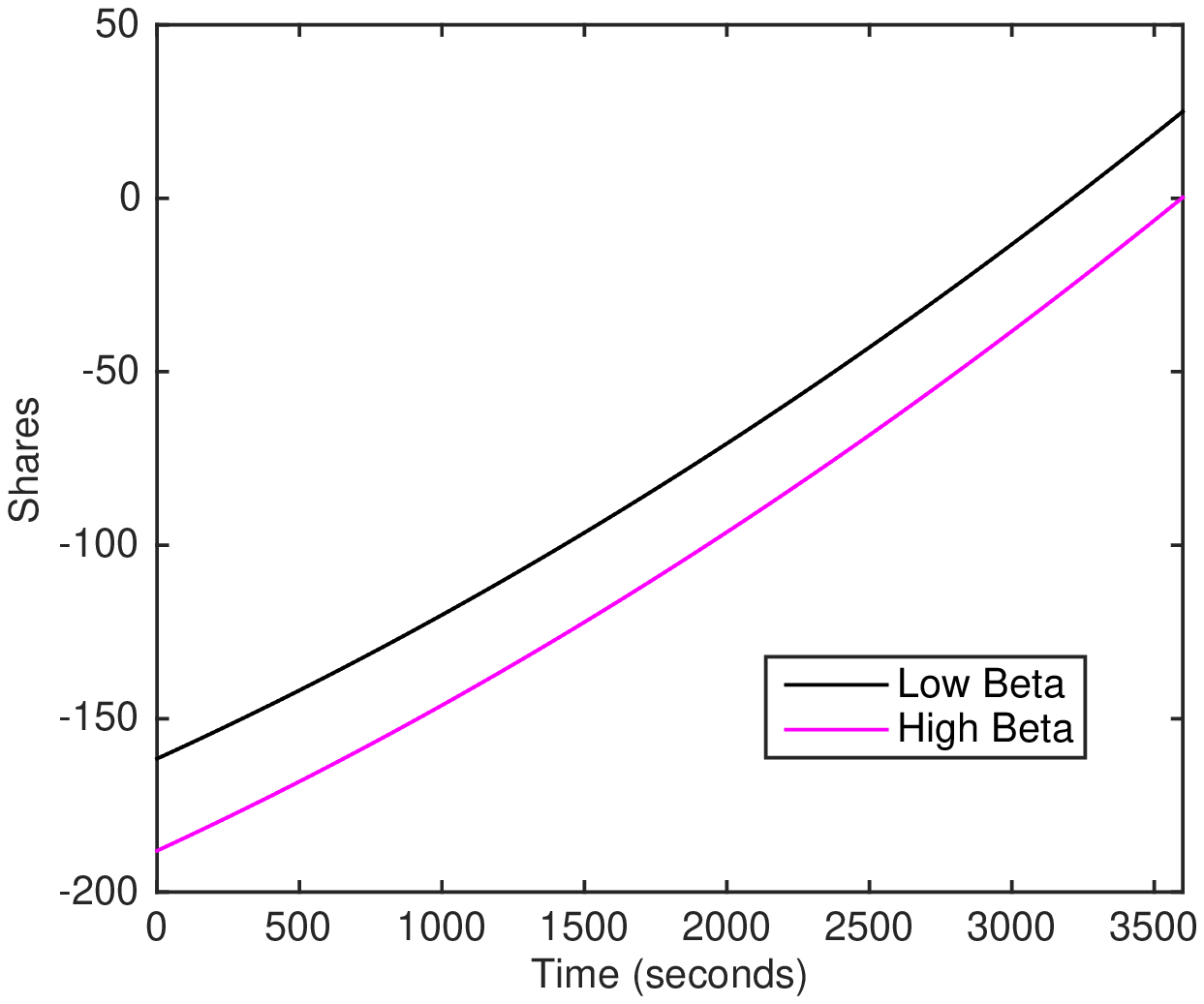}}
		\subfigure{\includegraphics[trim = 10 0 35 15, clip,width=2.9in]{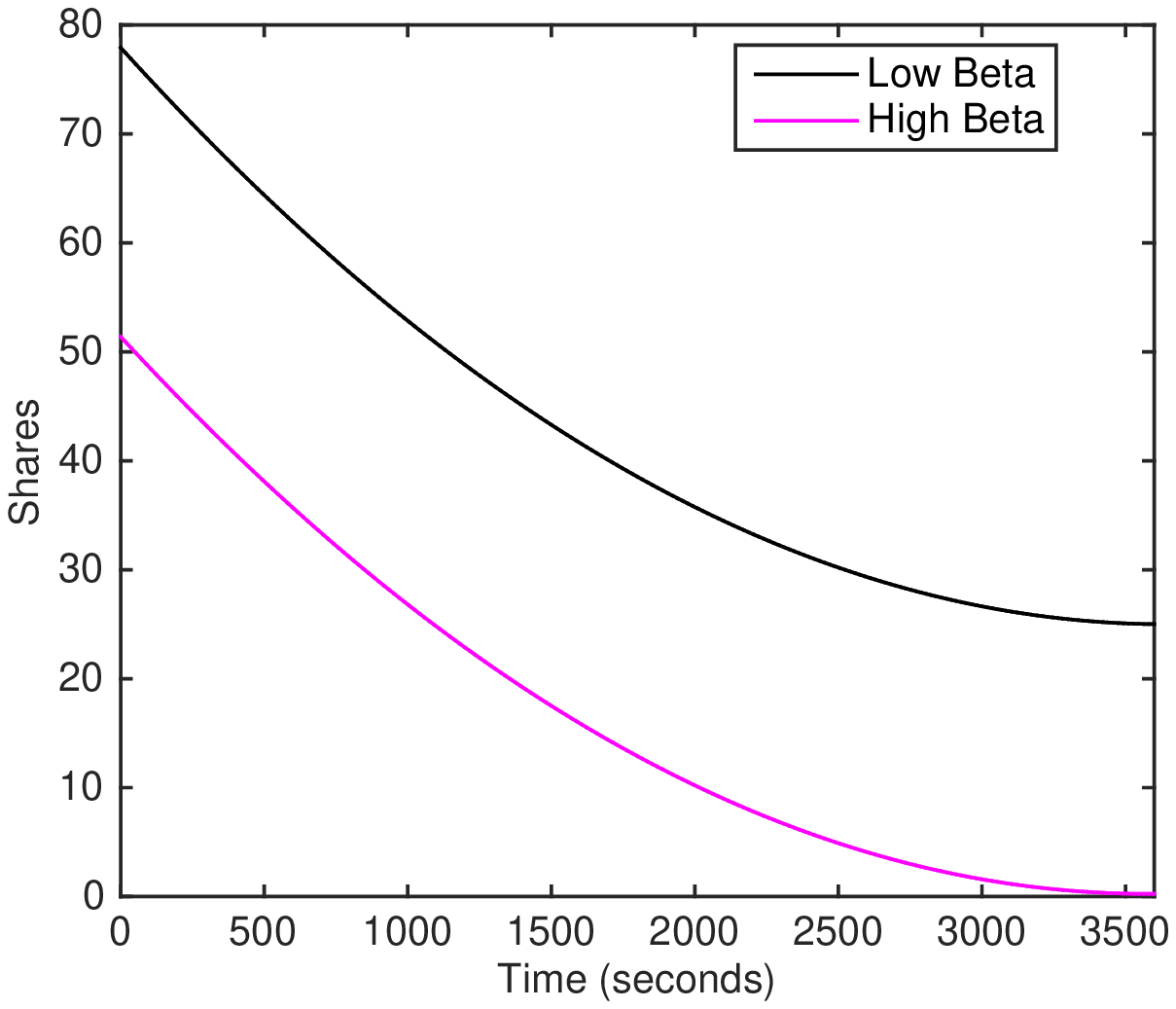}}
		\caption{\small{Buy-sell boundaries over time when $T=3,600$, $\beta_1=5\cdot10^{-4}$ $\beta_2=10^{-4}$, $\eta_0=0.05$, $\gamma=2.5\cdot10^{-7}$, $\eta_1=\eta_2=0.1$, $\mu=10^{-6}$, $\sigma=5\cdot10^{-3}$ and $m_0=16.\bar{6}$. On the left the boundary is generally increasing ($\rho=-0.2$), while on the right, it is generally increasing ($\rho=-0.0005$). We vary $\beta$ in each figure. The low $\beta$ is equal to $10^{-3}$ while the high $\beta$ is $0.1$.}}
		\label{fig:buySellThreshSigmaDec}
	\end{centering}
\end{figure}

\section{Linear Uncertainty of Limit Orders}\label{sec:linear_unc}
In this section, we  consider the case of linear uncertainty of limit orders. This amounts to taking   $m_0=0$ in the affine uncertainty model. We discuss a number of properties of the solution and optimal strategies. 

\subsection{Liquidation Penalty and Trading Horizon Trade-off}\label{sec:penalty_tradeoff}
We begin by writing down the condition for optimality as well as the ODEs for $a$, $b$ and $c$ that characterize the solution to the optimal execution problem. Recall the second-order condition: $V_{xx}(t,x)<\frac{C}{m_1^2}-\gamma$. From the quadratic ansatz, $V_{xx}$ is independent of $x$, so the condition depends only on time. Thus, we find that upon solving the ODE for $a(t)$ numerically or otherwise, we must check that 
\begin{equation*}
\begin{aligned}
\underset{0\le t\le T}{\sup} \, a(t)<\frac{C}{2m_1^2}-\frac{\gamma}{2},
\end{aligned}
\end{equation*}
for the particular set of problem parameters. 

To that end, let us look at the ODEs for $a$, $b$ and $c$. In this case, they solve the system:
\begin{equation}
\begin{aligned}
0&=a^\prime(t)+\frac{2(\alpha+\beta_1+\beta_2)-m_1^2(2a(t)+\gamma)}{(\eta_1+\beta_1)(C-m_1^2(2a(t)+\gamma))}a^2(t),\\
0&=b^\prime(t)+\mu+\frac{2(\alpha+\beta_1+\beta_2)-m_1^2(2a(t)+\gamma)}{(\eta_1+\beta_1)(C-m_1^2(2a(t)+\gamma))}a(t)(b(t)+\eta_0)\\
0&=c^\prime(t)+\beta_1R_1+\beta_2R_2+\frac{2(\alpha+\beta_1+\beta_2)-m_1^2(2a(t)+\gamma)}{4(\eta_1+\beta_1)(C-m_1^2(2a(t)+\gamma))}(b(t)+\eta_0)^2.\\
\end{aligned}
\end{equation}
An implicit equation is available for $a(t)$. However, it is much more enlightening to have an explicit solution. One condition that allows for an explicit solution is
\begin{equation}
\begin{aligned}
2(\alpha+\beta_1+\beta_2)=C.
\end{aligned}
\end{equation}
Although this may seem arbitrary, there are 3 exogenous Lagrange multipliers we can choose freely so it is not too hard to impose this condition. With this restriction, we simplify the equation for $a(t)$ to
\begin{equation}
\begin{aligned}
0=a^\prime(t)+\frac{a^2(t)}{\eta_1+\beta_1},
\end{aligned}
\end{equation}
with the terminal condition $a(T)=\frac{\gamma}{2}-\beta$. This leads to the explicit solution
\begin{equation}\label{atexpli}
a(t)=-\frac{\left(\eta_1+\beta_1\right)\left(2\beta-\gamma\right)}{2\left(\eta_1+\beta_1\right)+(T-t)\left(2\beta-\gamma\right)}.
\end{equation}

Next, we solve for $b(t)$. Divide equation for $b(t)$ by $a(t)$ (again valid since $a(t)<0\,\forall\, t$) and rearrange, we have
\begin{equation}
\begin{aligned}
-\frac{\mu}{a(t)}=\frac{b^\prime(t)}{a(t)}+\frac{b(t)+\eta_0}{\eta_1+\beta_1}.
\end{aligned}
\end{equation}
Plugging in the expression of  $a(t)$ from \eqref{atexpli} and recalling  \eqref{eq:ODE_id}, we arrive at 
\begin{equation}
\begin{aligned}
&\frac{d}{dt}\left[\frac{b(t)+\eta_0}{a(t)}\right]=\frac{2\mu}{2\beta-\gamma}+\frac{\mu(T-t)}{\eta_1+\beta_1}\\
\implies&b(t)=-\eta_0-a(t)\left[\frac{2\eta_0}{2\beta-\gamma}+\frac{2\mu(T-t)}{2\beta-\gamma}+\frac{\mu(T-t)^2}{2(\eta_1+\beta_1)}\right].
\end{aligned}
\end{equation}
 With this,  $c(t)$ can be computed  by direct integration of the associated ODE. The solution is not useful in our analysis however.


Using \eqref{atexpli}, we can express condition \eqref{optimalcond}  that ensures the finiteness of the value function as
\begin{equation}\label{eq:opt_cond_reduce}
\begin{aligned}
-\frac{\left(\eta_1+\beta_1\right)\left(2\beta-\gamma\right)}{2\left(\eta_1+\beta_1\right)+T\left(2\beta-\gamma\right)}<\frac{C-\gamma m_1^2}{2m_1^2}.
\end{aligned}
\end{equation}
Again, we assume $2\beta>\gamma$. If $C\ge\gamma m_1^2$, the left-hand side of \eqref{eq:opt_cond_reduce} is negative, while the right-hand side is non-negative so the condition holds. If on the other hand, $C<\gamma m_1^2$, we must have
\begin{equation}
T<\frac{4m_1^2(\eta_1+\beta_1)(\beta-\gamma)+2(\eta_1+\beta_1)C}{(\gamma m_1^2-C)(2\beta-\gamma)}:=T_{max}.\label{condition11}
\end{equation}
In other words,  we have translated condition \eqref{optimalcond} into   an upper bound on the horizon $T$.  This means that there is a finite maximum trading horizon in order  for the value function to be finite.

Given a fixed $T>0$,  we can turn   condition \eqref{condition11} to a lower bound on $\beta$, that is,   \begin{equation}\label{condccc}
 \beta>\gamma-\frac{C}{2m_1^2}>\frac{\gamma}{2}.
  \end{equation}
In fact, condition \eqref{condccc} is    more stringent than the original condition:  $\beta>\gamma/2$.

Also, the maximum horizon $T_{max}$ is increasing in $\beta$, namely, 
\begin{equation*}
\frac{\partial T_{max}}{\partial\beta}=\frac{\partial}{\partial\beta}\left[\frac{4m_1^2(\eta_1+\beta_1)(\beta-\gamma)+2(\eta_1+\beta_1)C}{(\gamma m_1^2-C)(2\beta-\gamma)}\right]=\frac{4(\eta_1+\beta_1)}{(2\beta-\gamma)^2}>0.
\end{equation*}
This reveals that a higher non-liquidation penalty coefficient $\beta$ permits a longer admissible trading horizon $T_{\max}$ because  the trader is sufficiently motivated to achieve full liquidation, rather than trading for profits.  Nevertheless, there is a finite limit for the maximum horizon. Indeed, as $\beta\rightarrow\infty$, $T_{max}\rightarrow {2m_1^2(\eta_1+\beta_1)}/{(\gamma m_1^2-C)}$.  



Let us consider  the trade-off  differently by imposing a condition on $\beta$. Starting from equation \eqref{eq:opt_cond_reduce}, we obtain
\begin{equation*}
\begin{aligned}
-\left[\eta_1+\beta_1+\frac{C-\gamma m_1^2}{2m_1^2}T\right]\left(\beta-\frac{\gamma}{2}\right)<\frac{C-\gamma m_1^2}{2m_1^2}(\eta_1+\beta_1).
\end{aligned}
\end{equation*}
The coefficient of $\beta-\frac{\gamma}{2}$ must be positive. If it were non-positive then the fact that both $\eta_1>0$ and $\beta_1>0$ implies $C<\gamma m_1^2$ (it must be strict for otherwise, the coefficient would be positive) and so the right-hand side of the inequality is negative. However, the left-hand side would then be non-negative after accounting for the negative sign, so the condition cannot hold if this coefficient is non-positive. It follows that we can rewrite condition \eqref{eq:opt_cond_reduce} as
\begin{equation*}
\begin{aligned}
\beta>\frac{\gamma}{2}-\frac{(C-\gamma m_1^2)(\eta_1+\beta_1)}{2m_1^2(\eta_1+\beta_1)+{(C-\gamma m_1^2)}T}.
\end{aligned}
\end{equation*}
Our discussion above demonstrates that the denominator of the second term is positive does not necessarily require $C\ge\gamma m_1^2$ (the difference can be negative, just not \emph{too} negative), so this is not a trivial condition.\footnote{In other words, if $C\ge\gamma m_1^2$, then (accounting for the negative sign), the lower bound is strictly less than $\frac{\gamma}{2}$. We already require that $\beta>\frac{\gamma}{2}$, so in this case the extra condition is trivial.} Putting this together with the previous restriction on $\beta$, we have
\begin{equation*}
\begin{aligned}
\beta>\frac{\gamma}{2}+\left[\frac{(\gamma m_1^2-C)(\eta_1+\beta_1)}{2m_1^2(\eta_1+\beta_1)+{(C-\gamma m_1^2)}T}\right]^+.
\end{aligned}
\end{equation*}
From this  condition, we see   that the trader must be imposed with a sufficiently high  non-liquidation  penalty. If not, she will spend time profiting from other opportunities over the trading horizon and by time $t=T$, she need not liquidate the asset fully. In particular, she will follow a strategy that exploits profits in the model far in excess of her costs for non-liquidation. Mathematically, this condition on $\beta$ guarantees the finiteness of the value function for the optimal liquidation problem.

\subsection{Infinite Uncertainty Limit}\label{sec:infUncertain}
 
In the case of linear uncertainty, it is possible to set $L_t=0$ and ignore the   limit order fill uncertainty. Conversely, a large linear uncertainty should promote little to no limit orders.   Intuitively, if limit orders have infinite uncertainty to fill, then we expect that they will not be utilized.  To demonstrate this, we  begin by taking the limit as $m_1\rightarrow\infty$ in equation \eqref{eq:ODE_system} with $m_0=0$.\footnote{The ODE is different for $c$ if $m_0\neq0$, but everything else that follows in this section still holds for $a$ and $b$.} For simplicity, we also take $\mu=0$. The limiting ODE system becomes
\begin{equation}
\begin{aligned}
0&=a^\prime(t)+\frac{a^2(t)}{\eta_1+\beta_1},\\
0&=b^\prime(t)+\frac{a(t)(b(t)+\eta_0)}{\eta_1+\beta_1},\\
0&=c^\prime(t)+\beta_1R_1+\beta_2R_2+\frac{(b(t)+\eta_0)^2}{4(\eta_1+\beta_1)},
\end{aligned}
\end{equation}
with the terminal conditions $a(T)=\frac{\gamma}{2}-\beta$, $b(T)=0$, $c(T)=0$.\footnote{As the solution to an ODE is effectively an integral, we are interchanging limit and integral and must justify the switch. The implicit solution to the ODE for $a(t)$ is $-K_1\ln|z_1-a(T)|+K_1\ln|z_1-a(t)|+K_2\ln|a(T)|-K_2\ln|a(t)|-\frac{K_3}{a(T)}+\frac{K_3}{a(t)}=-z_3(T-t)$, where $K_1=K_2=\frac{z_2-z_1}{z_1^2}$, $K_3=\frac{z_2}{z_1}$, $z_1=\frac{2(\alpha+\beta_1+\beta_2)-\gamma m_1^2}{2m_1^2}$, $z_2=\frac{C-\gamma m_1^2}{2m_1^2}$, and $z_3=\frac{1}{\eta_1+\beta_1}$. The limiting solution as $m_1\rightarrow\infty$ is exactly the solution to the above ODE for $a$. The closed form solution for $b$ is $b(t)=-\eta_0+a(t)\left[\int_t^T\frac{\mu}{a(s)}ds-\frac{2\eta_0}{2\beta-\gamma}\right]$. If $\mu=0$, then we can take the limit and get the solution for $b(t)$ from the last section when $\mu=0$ and $m_1\rightarrow\infty$.} We have previously solved the ODE for $a(t)$ in section \ref{sec:penalty_tradeoff}. Therefore, the optimality condition regarding the supremum of $a(t)$ is the same and since $m_1\rightarrow\infty$, we can never have $C\ge\gamma m_1^2$. Thus, there is a condition on $T$ that it must be less than $T_{max}$, when $m_1\rightarrow\infty$. This limit indicates
\begin{equation}
\begin{aligned}
T<\frac{4(\eta_1+\beta_1)(\beta-\gamma)}{\gamma(2\beta-\gamma)}.
\end{aligned}
\end{equation}
We can put this in terms of a lower bound on the non-liquidation penalty:
\begin{equation*}
\begin{aligned}
\beta>\frac{\gamma}{2}+\left[\frac{\gamma(\eta_1+\beta_1)}{2(\eta_1+\beta_1)-\gamma T}\right]^+.
\end{aligned}
\end{equation*}
Assuming the optimality condition holds, then the solution we have generated from the first order conditions is the unique optimizer for the stochastic control problem. Note that implicitly, $\beta$ must be greater than $\gamma$ now for the right-hand side bound to even be positive.
 
Next we consider the trading rates. Looking at equation \eqref{eq:optimzers}, we let $m_1\rightarrow\infty$. When we do that, $L_t^*\rightarrow0$, while
\begin{equation}
\begin{aligned}
v_t^*\rightarrow-\frac{1}{2}\left(\frac{V_x+\eta_0}{\eta_1+\beta_1}\right)=\left(x_t-\frac{\eta_0}{\eta_1+\beta_1}\right)\frac{2\beta-\gamma}{2\left(\eta_1+\beta_1\right)+(T-t)\left(2\beta-\gamma\right)}. 
\end{aligned}
\end{equation}
So in the infinite uncertainty case, the strategy is to place \emph{only} market orders.\footnote{It is worth noting that the trading rate is non-negative as long as $x_t>\frac{\eta_0}{\eta_1+\beta_1}$.} Interestingly, the infinite uncertainty limit is exactly that of \cite{TaiHoACF} in the constant uncertainty case (recall before that this did not depend on $m_0$.) When $\eta_0=0$ and $\beta_1=0$, we get the optimal trading rate derived in the appendix of that paper.

\section{Schedule Following}\label{sec:schedule_following}
In this section, we incorporate a parent order schedule into the order placement problem. The trader  has both market and limit orders at her disposal and is now given  a time-deterministic schedule function, $Q(t)$ defined over the trading horizon $[0,T]$. We assume that $Q(t)$ is a non-negative, non-increasing, bounded continuous function of time, with an initial value $Q(0)=x_0$.   The trader seeks to track $Q(t)$ as closely as possible. Specifically, we would like to keep the stochastic number of shares that we hold at time $t$, $x_t$ close to $Q(t)$ for all times $t\in[0,T]$ and not only at the terminal time $T$. To avoid having conflicting goals, we set $Q(T)=0$ to have a schedule for full liquidation. 


In order to keep $x_t$ close to $Q(t)$, we consider a penalty of the form  $\int_0^T\lambda\left(u,x_u-Q(u)\right)du$, where $\lambda(t,y)$  is a function with global maximum of 0 at $y=0$, $\forall\,t\in[0,T]$. Since $\lambda(t,y)$ has a global max of 0 at 0, the optimizer should choose order types in such a way as to keep $x_t-Q(t)$ close to or equal to 0. Furthermore, the penalty term accumulates deviations at all times $t\in[0,T]$ and may even place more emphasis on certain times due to its time argument. For example, $\lambda(t,y)$ could be increasing in $t$ for all $y$ then deviations early on are allowed, but as we approach the terminal time, the scheduler shall be forced to push $x_t$ closer to $Q(t)$. Henceforth, we let  $\lambda(t,y)=-w(t)y^2$ so that deviations from above/below are penalized equally, and in just the same manner as the terminal penalty. We assume that $w(t)$ is a non-negative and continous function of time. Moreover, we assume that $w(t)$ is bounded over the interval $[0,T].$


Now we will maximize the sum of expected compensated PNL as defined previously together with expected accumulated deviations. The value function is \begin{equation}
\begin{aligned}
&V(t,x):=\underset{(v_t,L_t)_{0\le t\le T}}{\sup}\,\E\left[\bar{f}(x)+\frac{\gamma}{2}x_T^2+\int_t^T \left[g(x_u,L_u,v_u) +\lambda(u,x_u-Q(u))\right]du\Big|x_t=x\right]-\frac{\gamma}{2}x^2.
\end{aligned}
\end{equation}
We then conclude that $V$ satisfies the following nonlinear HJB PDE problem:
\begin{equation}
\begin{aligned}
V_t+\underset{v,L}{\sup}\left[-(v+L)V_x+\frac{1}{2}m^2(L)V_{xx}+g(x,v,L)\right]+\lambda(t,x-Q(t)) &= 0, \quad (t,x) \in [0,T) \times \R,\\
V(T,x)&=\bar{f}(x)+\frac{\gamma}{2}x^2, \quad x \in \R.
\end{aligned}
\end{equation}
Like in previous sections, one can perform the same optimization to derive the optimal trading rates $v^*$ and $L^*$, and the value function will be of the same quadratic form: $V(t,x)=a(t)x^2+b(t)x+c(t)$. However, the inhomogeneous term $\lambda(t,x-Q(t))$ will affect the solutions of resulting ODEs. Specifically, we have the ODE system:
\begin{equation}
\begin{aligned}
0&=a^\prime(t)+\frac{2(\alpha+\beta_1+\beta_2)-m_1^2(2a(t)+\gamma)}{(\eta_1+\beta_1)(C-m_1^2(2a(t)+\gamma))}a^2(t)-w(t),\\
0&=b^\prime(t)+\mu+\frac{2(\alpha+\beta_1+\beta_2)-m_1^2(2a(t)+\gamma)}{(\eta_1+\beta_1)(C-m_1^2(2a(t)+\gamma))}a(t)(b(t)+\eta_0)\\
&+\frac{(\eta_2-\eta_1-\alpha-2\beta_1)(m_0m_1(2a(t)+\gamma)+\rho\sigma m_0)}{(\eta_1+\beta_1)(C-m_1^2(2a(t)+\gamma))}a(t)+2w(t)Q(t),\\
0&=c^\prime(t)+\frac{m_0^2}{2}(2a(t)+\gamma)+\rho\sigma m_0+\beta_1R_1+\beta_2R_2+\frac{2(\alpha+\beta_1+\beta_2)-m_1^2(2a(t)+\gamma)}{4(\eta_1+\beta_1)(C-m_1^2(2a(t)+\gamma))}(b(t)+\eta_0)^2\\
&+\frac{(\eta_2-\eta_1-\alpha-2\beta_1)(m_0m_1(2a(t)+\gamma)+\rho\sigma m_0)}{2(\eta_1+\beta_1)(C-m_1^2(2a(t)+\gamma))}(b(t)+\eta_0)\\
&+\frac{(\eta_2-\eta_1-\alpha-2\beta_1)^2(m_0m_1(2a(t)+\gamma)+\rho\sigma m_0)^2}{8(\eta_1+\beta_1)(C-m_1^2(2a(t)+\gamma))(2(\alpha+\beta_1+\beta_2)-C)}-w(t)Q(t)^2\\
\end{aligned}
\end{equation}
with terminal conditions $a(T)=\frac{\gamma}{2}-\beta$, $b(T)=c(T)=0$.

\begin{figure}[H]
	\begin{centering}
		\includegraphics[trim = 15 0 25 15, clip,  width=4in]{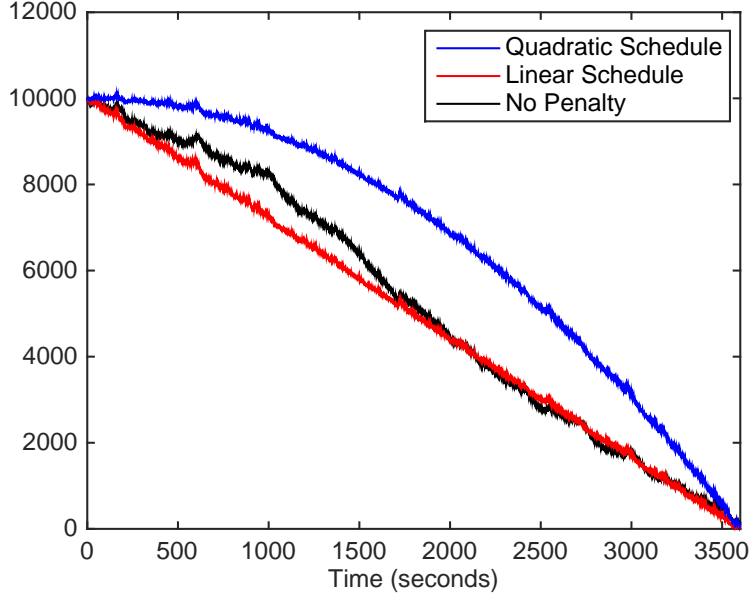}
		\caption{\small{The improvement to schedule following with a small, time-uniform penalty $w(t)=10^{-4}\,\forall\,t$. Other parameters are $x_0=10,000$, $T=3,600$, $\beta_1=5\cdot10^{-4}$, $\beta_2=10^{-4}$, $\eta_0=0.05$, $\gamma=2.5\cdot10^{-7}$, $\eta_1=0.1$, $\eta_2=0.08$, $\mu=10^{-6}$, $\sigma=0.005$, $\rho=-0.2$, $m_0=8.\bar{3}$, and $m_1=3$. Superimposed are an unpenalized time series (blue), a time series targeting TWAP (red) and a time series targeting $Q(t)=x_0\left[1-\left(\frac{t}{T}\right)^2\right]$.}}
		\label{fig:improvementFigSec5_combined}
	\end{centering}
\end{figure}


In practice,  traders may be given guidance to follow the time-deterministic schedule, such as the one     in  \cite{AC2000}. Our   model   allows a trader to quantitatively evaluate the cost of deviating from schedule. Here, we illustrate the  allocation of  market and limit orders over time accounting for the penalty of schedule deviation.    Figure \ref{fig:improvementFigSec5_combined} displays  the trader's stock holdings over time in three settings: (i) a quadratic schedule defined by $Q(t)=x_0\left[1-\left(\frac{t}{T}\right)^2\right]$, and (ii) a linear schedule $Q(t) = x_0\left(1-\frac{t}{T}\right)$, and (iii) no schedule (wherein we set $w(t)\equiv0$). As we can see, the trader's position persistently tracks the schedule over the trading horizon, even with a small  constant penalization coefficient $w(t)=10^{-4}$. Compared to the linear schedule, the quadratic schedule is useful for an institution that seeks to start out trading slowly and eventually speed up at the end of the sell program. We see that our model is capable of handling complicated non-linear schedules for stock holdings over time. 

We examine   the effects on trading rates in Figure \ref{fig:scheduledTradingRates}. On the left panel, there is no penalty for deviating from the linear schedule, while on the right we give a small penalty for deviating. We find that the penalized allocator trades much more quickly and reacts more quickly to price movements. Indeed, the non-penalized trading rates are very stable until the very end of the sell program and the penalized trading rates change rapidly over time. In contrast, the total trading rate in the non-penalized case (right panel) is fluctuating in a relatively small range. In the penalized case, more market orders are used over time, but the opposite is true when the penalty is removed. 

\begin{figure}[H]
	\begin{centering}
		\subfigure{\includegraphics[width=3in]{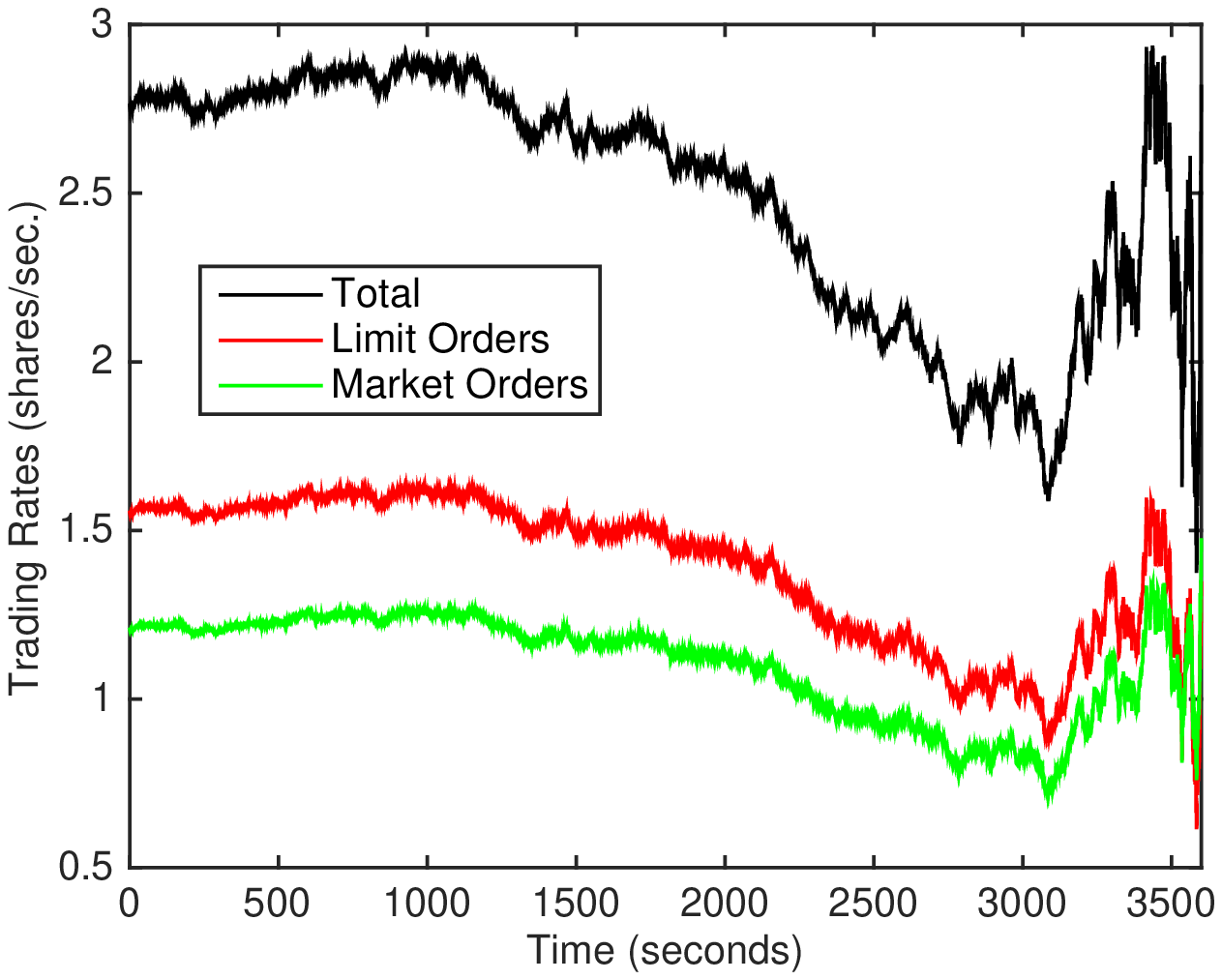}}
		\subfigure{\includegraphics[width=3in]{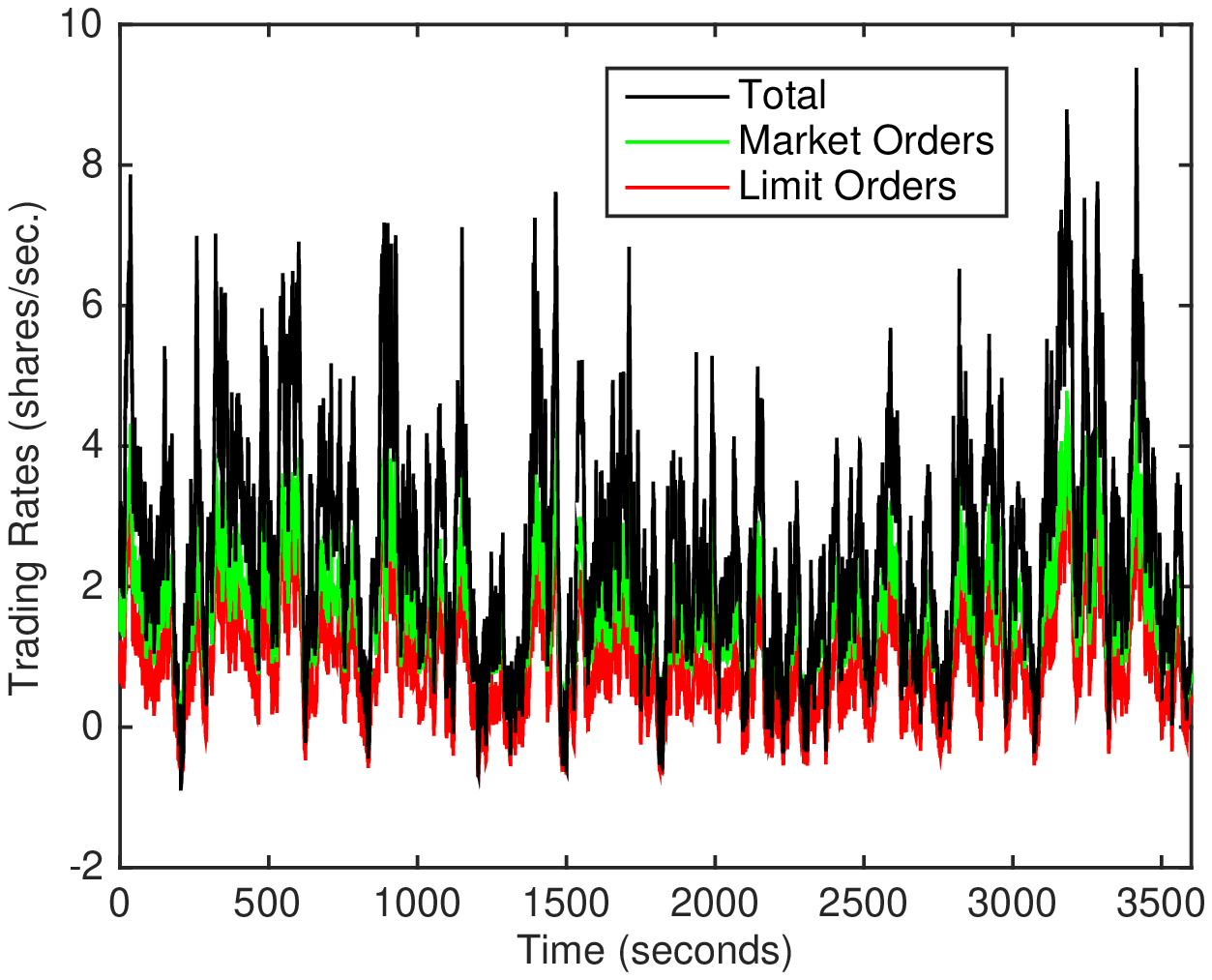}}
		\caption{\small{Comparison of trading rates with and without penalty. On the left, there is no penalty and on the left we penalize deviations from a linear schedule with constant weight $w(t)=10^{-4}$. Other parameters are $x_0=10,000$, $T=3,600$, $\beta_1=5\cdot10^{-4}$, $\beta_2=10^{-4}$, $\eta_0=0.05$, $\gamma=2.5\cdot10^{-7}$, $\eta_1=0.1$, $\eta_2=0.08$, $\mu=10^{-6}$, $\sigma=0.005$, $\rho=-0.2$, $m_0=8.\bar{3}$, and $m_1=3$. Final positions are 190.4408 and 94.4218, respectively.}}
		\label{fig:scheduledTradingRates}
	\end{centering}
\end{figure}

%

\section{Concluding Remarks}\label{sec:conclude} By incorporating a number of new features, our order placement model allows for better control of the trading problem. For example, the trade limiter and director regulate  the speed and direction of trades. Moreover, the multiple penalties lead to a number of trade-offs. Among them,  we find that the trade director cannot be too strong or too weak unless the speed limiters are sufficiently severe. 

The literature on optimal execution continues to grow in volume and diversity.  One direction  for future research as pertained to our  order placement problem  is  to expand the trading platform to include multiple exchanges to allocate orders as in \cite{arseniythesis}. These exchanges have varying costs of filling orders and risks associated with them. This gives rise to a venue allocation problem. Another related problem is trading in lit and dark pools. Whereas in a lit pool (or exchange) the order book imbalance can be observed, a dark pool blinds the trader from such information (see e.g. \cite{kratzSchoneborn}). This suggests a robust optimization or model uncertainty approach to optimal execution not previously considered. Furthermore,  the problem of  optimal liquidation is applicable not only to stocks, but also to options and other derivatives (see e.g. \cite{LeungLiu2012,LeungShirai2015}). 

\singlespacing

\begin{small}
    \bibliographystyle{apa}
    \bibliography{mybib}
\end{small}

\end{document}